\newlength{\halfpagewidth}
\newtheorem{definition}{Definition}
\newtheorem{proposition}[definition]{Proposition}
\newtheorem{Lemma}[definition]{Lemma}
\newtheorem{Theorem}[definition]{Theorem}
\newtheorem{conjecture}[definition]{Conjecture}
\newtheorem{remark}[definition]{Remark}
\newtheorem{example}[definition]{Example}
\newtheorem{question}[definition]{Question}
\def\squareforqed{\hbox{\rlap{$\sqcap$}$\sqcup$}}
\def\qed{\ifmmode\squareforqed\else{\unskip\nobreak\hfil
		\penalty50\hskip1em\null\nobreak\hfil\squareforqed
		\parfillskip=0pt\finalhyphendemerits=0\endgraf}\fi}
\def\endenv{\ifmmode\;\else{\unskip\nobreak\hfil
		\penalty50\hskip1em\null\nobreak\hfil\;
		\parfillskip=0pt\finalhyphendemerits=0\endgraf}\fi}
\newenvironment{proof}{\noindent \textbf{{Proof.~} }}{\qed}
\def\Dbar{\leavevmode\lower.6ex\hbox to 0pt
	{\hskip-.23ex\accent"16\hss}D}
\def\url@leostyle{%
	\@ifundefined{selectfont}{\def\UrlFont{\sf}}{\def\UrlFont{\small\ttfamily}}}
\def\bcj{\begin{conjecture}}
	\def\ecj{\end{conjecture}}
\def\bcr{\begin{corollary}}
	\def\ecr{\end{corollary}}
\def\bd{\begin{definition}}
	\def\ed{\end{definition}}
\def\bea{\begin{eqnarray}}
\def\eea{\end{eqnarray}}
\def\bem{\begin{enumerate}}
	\def\eem{\end{enumerate}}
\def\bex{\begin{example}}
	\def\eex{\end{example}}
\def\bim{\begin{itemize}}
	\def\eim{\end{itemize}}
\def\bl{\begin{lemma}}
	\def\el{\end{lemma}}
\def\bma{\begin{bmatrix}}
	\def\ema{\end{bmatrix}}
\def\bpf{\begin{proof}}
	\def\epf{\end{proof}}
\def\bpp{\begin{proposition}}
	\def\epp{\end{proposition}}
\def\bqu{\begin{question}}
	\def\equ{\end{question}}
\def\br{\begin{remark}}
	\def\er{\end{remark}}
\def\bt{\begin{theorem}}
	\def\et{\end{theorem}}
\def\btb{\begin{tabular}}
	\def\etb{\end{tabular}}
\newcommand{\nc}{\newcommand}
\nc{\bbA}{\mathbb{A}} \nc{\bbB}{\mathbb{B}} \nc{\bbC}{\mathbb{C}}
\nc{\bbD}{\mathbb{D}} \nc{\bbE}{\mathbb{E}} \nc{\bbF}{\mathbb{F}}
\nc{\bbG}{\mathbb{G}} \nc{\bbH}{\mathbb{H}} \nc{\bbI}{\mathbb{I}}
\nc{\bbJ}{\mathbb{J}} \nc{\bbK}{\mathbb{K}} \nc{\bbL}{\mathbb{L}}
\nc{\bbM}{\mathbb{M}} \nc{\bbN}{\mathbb{N}} \nc{\bbO}{\mathbb{O}}
\nc{\bbP}{\mathbb{P}} \nc{\bbQ}{\mathbb{Q}} \nc{\bbR}{\mathbb{R}}
\nc{\bbS}{\mathbb{S}} \nc{\bbT}{\mathbb{T}} \nc{\bbU}{\mathbb{U}}
\nc{\bbV}{\mathbb{V}} \nc{\bbW}{\mathbb{W}} \nc{\bbX}{\mathbb{X}}
\nc{\bbZ}{\mathbb{Z}}
\nc{\bA}{{\bf A}} \nc{\bB}{{\bf B}} \nc{\bC}{{\bf C}}
\nc{\bD}{{\bf D}} \nc{\bE}{{\bf E}} \nc{\bF}{{\bf F}}
\nc{\bG}{{\bf G}} \nc{\bH}{{\bf H}} \nc{\bI}{{\bf I}}
\nc{\bJ}{{\bf J}} \nc{\bK}{{\bf K}} \nc{\bL}{{\bf L}}
\nc{\bM}{{\bf M}} \nc{\bN}{{\bf N}} \nc{\bO}{{\bf O}}
\nc{\bP}{{\bf P}} \nc{\bQ}{{\bf Q}} \nc{\bR}{{\bf R}}
\nc{\bS}{{\bf S}} \nc{\bT}{{\bf T}} \nc{\bU}{{\bf U}}
\nc{\bV}{{\bf V}} \nc{\bW}{{\bf W}} \nc{\bX}{{\bf X}}
\nc{\bZ}{{\bf Z}}
\nc{\cA}{{\cal A}} \nc{\cB}{{\cal B}} \nc{\cC}{{\cal C}}
\nc{\cD}{{\cal D}} \nc{\cE}{{\cal E}} \nc{\cF}{{\cal F}}
\nc{\cG}{{\cal G}} \nc{\cH}{{\cal H}} \nc{\cI}{{\cal I}}
\nc{\cJ}{{\cal J}} \nc{\cK}{{\cal K}} \nc{\cL}{{\cal L}}
\nc{\cM}{{\cal M}} \nc{\cN}{{\cal N}} \nc{\cO}{{\cal O}}
\nc{\cP}{{\cal P}} \nc{\cQ}{{\cal Q}} \nc{\cR}{{\cal R}}
\nc{\cS}{{\cal S}} \nc{\cT}{{\cal T}} \nc{\cU}{{\cal U}}
\nc{\cV}{{\cal V}} \nc{\cW}{{\cal W}} \nc{\cX}{{\cal X}}
\nc{\cZ}{{\cal Z}}
\nc{\hA}{{\hat{A}}} \nc{\hB}{{\hat{B}}} \nc{\hC}{{\hat{C}}}
\nc{\hD}{{\hat{D}}} \nc{\hE}{{\hat{E}}} \nc{\hF}{{\hat{F}}}
\nc{\hG}{{\hat{G}}} \nc{\hH}{{\hat{H}}} \nc{\hI}{{\hat{I}}}
\nc{\hJ}{{\hat{J}}} \nc{\hK}{{\hat{K}}} \nc{\hL}{{\hat{L}}}
\nc{\hM}{{\hat{M}}} \nc{\hN}{{\hat{N}}} \nc{\hO}{{\hat{O}}}
\nc{\hP}{{\hat{P}}} \nc{\hR}{{\hat{R}}} \nc{\hS}{{\hat{S}}}
\nc{\hT}{{\hat{T}}} \nc{\hU}{{\hat{U}}} \nc{\hV}{{\hat{V}}}
\nc{\hW}{{\hat{W}}} \nc{\hX}{{\hat{X}}} \nc{\hZ}{{\hat{Z}}}
\nc{\hn}{{\hat{n}}}
\def\max{\mathop{\rm max}}
\def\min{\mathop{\rm min}}
\def\tr{\mathop{\rm Tr}}
\newcommand{\bra}[1]{\langle#1|}
\newcommand{\ket}[1]{|#1\rangle}
\newcommand{\norm}[1]{\lVert#1\rVert}
\def\Dbar{\leavevmode\lower.6ex\hbox to 0pt
	{\hskip-.23ex\accent"16\hss}D}
\begin{document}
	\title{A Genuine Multipartite Entanglement Measure Generated by the Parametrized Entanglement Measure }
	
	\author{Xian Shi}\email[]
	{shixian01@gmail.com}
\affiliation{College of Information Science and Technology, Beijing University of Chemical Technology, Beijing 100029, China}
	
	%
	
	
	
	\date{\today}
	
	\pacs{03.65.Ud, 03.67.Mn}

\begin{abstract}
\indent In this paper, we investigate a genuine multipartite entanglement measure based on the geometric method. This measure arrives at the maximal value for the absolutely maximally entangled states and has desirable properties for quantifying the genuine multipartite entanglement. We present a lower bound of the genuine multipartite entanglement measure. At last, we present some examples to show that the genuine entanglement measure is with distinct entanglement ordering from other measures, and we also present the advantages of the measure proposed here with other measures.
\end{abstract}

\maketitle
\section{introduction}
 \indent \indent Quantum entanglement is an essential feature of quantum mechanics. It plays an important role in quantum information and quantum computation theory \cite{horodecki2009quantum}, such as superdense coding \cite{bennett1992communication}, teleportation \cite{bennett1993teleporting} and the speedup of quantum algorithms \cite{shimoni2005entangled}. \\
\indent  One of the most important problems is to quantify the entanglement in a composite quantum system. Vedral $et$ $al.$ in \cite{vedral1997quantifying} presented the condition that the amount of entanglement cannot increase under local operation and classical communication (LOCC) is necessary for an entanglement measure. Then Vidal considered the entanglement measures with stronger properties and proposed a general mathematical framework to build entanglement monotone with functions satisfying some properties for pure states\cite{vidal2000entanglement}. The other interesting approach with operational significance to quantify the entanglement is proposed in \cite{gour2020optimal,shi2021extension}. Compared with the bipartite entanglement systems, the complexity of a multipartite entanglement system grows remarkly with the increasing number of parties and the increasing dimension of the systems. The notion of multipartite entanglement measure can be refined into the so-called genuine multipartite entanglement (GME) \cite{guhne2005multipartite,plenio2014introduction}. Substantial results have been achiveved on the GME measures in the last few decades \cite{meyer2002global,blasone2008hierarchies,hiesmayr2009two,ma2011measure,jungnitsch2011taming,rafsanjani2012genuinely,xie2021triangle,beckey2021computable,guo2022genuine,li2022geometric}. In \cite{blasone2008hierarchies}, the authors proposed a genuinely entangled measure defined as the shortest distance from a given state to the k-separable states, which was denoted as generalized geometric measure (GGM). The other GME measure, the genuinely multipatite concurrence (GMC) was defined as the minimal bipartite concurrence among all bipartitions \cite{ma2011measure}. However, the above two measures cannot show all the conditions of entanglement among the parties, both two are defined on the minimizations of the partitions. Concurrence fill was proposed as a three-qubit GME measure \cite{xie2021triangle}, it is the square of the area of the three-qubit concurrence triangle. The authors in \cite{guo2022genuine} generalized the above method to build a multipartite entanglement measure, however, it is hard when the parties are bigger. Hence there is much work to do on how to understand and quantify the multipartite entanglement. Recently, another method to investigate the GME measure is proposed, it was based on the geometric mean of all bipartite entanglement measure concurrence \cite{li2022geometric}.\\
\indent In this paper, we investigate a GME measure which was based on the geometric mean in terms of a bipartite entanglement measure, $F_q$ \cite{yang2021parametrized}. This measure satisfies the following properties, subadditivity, continuity for pure states. We also present the bound of this GME measure based on the method proposed in \cite{dai2020experimentally}. At last, we present some examples to show that the GME measure proposed here is with different orders from GMC and GGM, we also prsent some advantages of the measure when comparing with GMC and GGM.\\
\indent This paper is organised as follows. In Sec. \MakeUppercase{\romannumeral2}, we present the preliminary knowledge needed here. In Sec. \MakeUppercase{\romannumeral3}, we present the main results. We show that the measure is a GME measure, then we consider the measure for the W states and GHZ states in $n$-qubit systems and present that the GHZ states is more entangled than the W states. We also present the measure satifies the subadditivity and continuity for pure states. And then we present a lower bound of the measure for multipartite mixed states. We also make some comparisons between the measure proposed here and the GMC, GGM by considering some examples. In Sec. \MakeUppercase{\romannumeral4}, we end with a conclusion.

\section{Preliminary Knowledge}\label{se2}
\indent Concurrence is one of the most important entanglement measures for bipartite quantum systems \cite{hill1997entanglement}, it has attracted much attention of the relevant researchers since the end of the last century \cite{wootters1998entanglement,coffman2000distributed,mintert2004concurrence,chen2005concurrence,zhang2016evaluation,dai2020experimentally,li2020improved}. For a bipartite pure state $\ket{\psi}_{AB},$ its concurrence is defined as 
\begin{align}
C(\ket{\psi}_{AB})=\sqrt{2(1-Tr_A\ket{\psi}\bra{\psi})},
\end{align}
when $\rho_{AB}$ is a mixed state, its concurrence is defined as
\begin{align}
C(\rho_{AB})=\min\sum_i p_iC(\ket{\psi}_i),
\end{align}
where the minimum takes over all the decompositions of $\rho_{AB}=\sum_ip_i\ket{\psi_i}\bra{\psi_i}.$ For two qubit mixed states, there exists a direct link between the concurrence and the entanglement of formation \cite{wootters1998entanglement}.\par 
 As a generalized von Neumann entropy, Tsallis-$q$ entropy \cite{tsallis1988possible,landsberg1998distributions} can present more properties of the entangled states \cite{tsallis2001peres,rossignoli2002generalized}. For a pure state $\ket{\psi}_{AB}=\sum_i\sqrt{\lambda_i}\ket{ii},$ then its Tsallis-$q$ entanglement measure \cite{san2010tsallis} is defined as
\begin{align*}
T_q(\ket{\psi}_{AB})=\frac{1-\rho_A^q}{q-1},
\end{align*}
here $\rho_A=Tr_B\rho_{AB},$ $q\in(0,1)\cup(1,\infty).$\par 
Motived by the Tsallis-$q$ entanglement entropy, Yang $et$ $al.$ proposed a parametrized entanglement measure, $q$-concurrence, for a bipartite entanglement systems \cite{yang2021parametrized}. When $\ket{\psi}_{AB}$ is a pure state, this measure is defined as
\begin{align}
\mathcal{C}_q(\ket{\psi}_{AB})=F_q(\rho_A),
\end{align} 
here $F(\rho)=1-\tr\rho^q$, $q\ge 2.$ When $\rho_{AB}$ is a mixed state, its $q$-concurrence is defined as
\begin{align}
\mathcal{C}_q(\rho_{AB})=\min\sum_i p_i\mathcal{C}_q(\ket{\psi_i}),
\end{align}
where the minimum takes over all the decompositions of $\rho_{AB}=\sum_ip_i\ket{\psi_i}\bra{\psi_i}.$\par 
Next we recall some properties of the function $F_q(\rho)$ proposed in \cite{yang2021parametrized}.
\begin{Lemma}
For any density matrix $\rho$ and $q\ge 2$, $F_q(\rho)$ satisfies the following properties:
\begin{itemize}
	\item[(i).] \emph{Non-negativity:} $F_q(\rho)\ge 0.$
	\item[(ii).] \emph{Subadditivity:} For a general bipartite state $\rho_{AB}$, $F_q(\rho_{AB})$ satisfies the inequalies:
	\begin{align}
	|F_q(\rho_A)-F_q(\rho_B)|\le F_q(\rho_{AB})\le F_q(\rho_A)+F_q(\rho_B).
	\end{align}
	\item[(iii).] \emph{Concavity} 
	\begin{align}
	\sum p_i F_q(\rho_i)\le F_q(\sum_i p_i\rho_i),
	\end{align}
	where $p_i\in(0,1],$ and $\sum_ip_i=1$, $\rho_i$ are density matrices. Furthermore, The equality holds if and only if $\rho_i$ are identical for all $p_i>0.$
	\item[(iv)] \emph{quasiconvex:}
	\begin{align}
	F_q(\sum_i p_i\rho_i)\le \sum p_i^q F_q(\rho_i)+1-\sum_ip_i^q,
	\end{align}
	where the inequality holds if and only if $\rho_i=\ket{\psi_i}\bra{\psi_i},$ and $\{\ket{\psi_i}\}$ are orthogonal.
 \end{itemize}
\end{Lemma}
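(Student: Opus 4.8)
The plan is to reduce every statement to the single scalar quantity $\tr\rho^q=\sum_k\lambda_k^q$, where $\{\lambda_k\}$ is the spectrum of $\rho$, and then to invoke standard convexity and trace inequalities for the functional $\rho\mapsto\tr\rho^q$. Property (i) is then immediate: since $\rho$ is a density matrix its eigenvalues satisfy $\lambda_k\in[0,1]$ with $\sum_k\lambda_k=1$, and because $q\ge 2>1$ we have $\lambda_k^q\le\lambda_k$, so $\tr\rho^q=\sum_k\lambda_k^q\le 1$ and $F_q(\rho)=1-\tr\rho^q\ge 0$.

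For (iii) I would rewrite $\sum_i p_iF_q(\rho_i)\le F_q(\sum_ip_i\rho_i)$ as $\tr(\sum_ip_i\rho_i)^q\le\sum_ip_i\tr\rho_i^q$, i.e.\ as the convexity of $\rho\mapsto\tr\rho^q$. This follows from the convexity of $t\mapsto t^q$ on $[0,\infty)$ for $q\ge 1$ together with the standard fact (Klein's inequality) that $\rho\mapsto\tr f(\rho)$ is convex whenever $f$ is convex. For $q>1$ the scalar function is strictly convex, so the strict form of Klein's inequality forces equality exactly when all the $\rho_i$ with $p_i>0$ coincide; I would extract this equality condition from that strictness at the level of the spectral decomposition.

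For (iv), expanding both sides shows the claim is equivalent to the superadditivity of the same functional on positive operators, namely $\tr(\sum_iX_i)^q\ge\sum_i\tr X_i^q$ with $X_i=p_i\rho_i\succeq 0$; here substituting $X_i=p_i\rho_i$ reproduces $\sum_ip_i^q\tr\rho_i^q$ on the right, since $\tr(p_i\rho_i)^q=p_i^q\tr\rho_i^q$. By induction this reduces to the two-term McCarthy-type inequality $\tr(X+Y)^q\ge\tr X^q+\tr Y^q$, valid for $q\ge 1$ and $X,Y\succeq 0$. For the equality case I would check directly that when the $\rho_i=\ket{\psi_i}\bra{\psi_i}$ are orthogonal pure states the spectrum of $\sum_ip_i\rho_i$ is exactly $\{p_i\}$, so both sides equal $\sum_ip_i^q$, and conversely read off the remaining cases from the equality analysis of the McCarthy inequality.

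Statement (ii) is the heart of the matter. Through $\tr\rho^q$, the right inequality $F_q(\rho_{AB})\le F_q(\rho_A)+F_q(\rho_B)$ is equivalent to $\tr\rho_A^q+\tr\rho_B^q\le 1+\tr\rho_{AB}^q$, which is precisely the subadditivity of the Tsallis-$q$ entropy for $q\ge 1$; for a product state it collapses to the elementary $(1-\tr\rho_A^q)(1-\tr\rho_B^q)\ge 0$, but the general case genuinely needs matrix-analytic input because $\rho_A,\rho_B,\rho_{AB}$ act on different spaces and need not commute. The left inequality $F_q(\rho_{AB})\ge|F_q(\rho_A)-F_q(\rho_B)|$ (an Araki--Lieb type bound) I would obtain from the right one by purification: introducing an ancilla $C$ with $\ket{\psi}_{ABC}$ purifying $\rho_{AB}$, complementary subsystems of a pure state share the same nonzero spectrum, so $\tr\rho_{AC}^q=\tr\rho_B^q$ and $\tr\rho_C^q=\tr\rho_{AB}^q$; applying subadditivity to the bipartition $A|C$ gives $F_q(\rho_B)\le F_q(\rho_A)+F_q(\rho_{AB})$, hence $F_q(\rho_{AB})\ge F_q(\rho_B)-F_q(\rho_A)$, and the symmetric choice yields the absolute value. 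The main obstacle is therefore establishing the subadditivity $\tr\rho_A^q+\tr\rho_B^q\le1+\tr\rho_{AB}^q$ across the full range $q\ge 2$; I expect to invoke the known subadditivity of Tsallis entropy (an Audenaert-type trace inequality) rather than reprove it from scratch, since the operator-convexity shortcut that settles $1\le q\le 2$ breaks down once $t\mapsto t^q$ ceases to be operator convex at $q=2$.
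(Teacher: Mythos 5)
The paper never actually proves this lemma---it is explicitly ``recalled'' from \cite{yang2021parametrized}, so there is no internal argument to compare against; your proposal supplies an actual proof, and its skeleton is sound and coincides with how these facts are established in that literature. Your reductions are all correct: (i) is purely spectral; (iii) is the convexity of $\rho\mapsto\tr\rho^q$ (Klein/trace-function convexity, with strictness for $q>1$ yielding the equality case); (iv) expands to the superadditivity $\tr\bigl(\sum_iX_i\bigr)^q\ge\sum_i\tr X_i^q$ for $X_i=p_i\rho_i\succeq0$, i.e.\ McCarthy's inequality; and in (ii) the right-hand bound is exactly Audenaert's subadditivity of Tsallis $q$-entropies for $q>1$ (you are right that this is the one step requiring genuine matrix-analytic input, and right that the operator-convexity shortcut is unavailable once $q>2$), while the Araki--Lieb-type left bound follows by purification, using that complementary reductions of a pure state share the same nonzero spectrum, so $F_q(\rho_C)=F_q(\rho_{AB})$ and $F_q(\rho_{AC})=F_q(\rho_B)$.

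One caveat, on (iv): the equality analysis of McCarthy's inequality characterizes saturation by $X_iX_j=0$ for $i\ne j$, i.e.\ mutually orthogonal supports, and this does not force the $\rho_i$ to be pure. Indeed, taking $p_1=p_2=\tfrac{1}{2}$ and $\rho_1,\rho_2$ maximally mixed on orthogonal two-dimensional subspaces gives equality with non-pure states. So the ``only if'' half of the equality condition as stated in the lemma cannot be derived by your route---in fact it is false as written (a defect inherited from the cited source, which your analysis would actually expose); your proof of the ``if'' half (orthogonal pure states make both sides equal to $1-\sum_ip_i^q$) is correct. Apart from this defect of the statement itself, your proposal is complete modulo the standard inequalities you invoke.
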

\par
Due to the properties of $F_q(\rho)$ above, we have that the maximum of $F_q(\ket{\psi}_{AB})$ is attained when its reduced density matrix of the smaller subsystem is the maximally mixed state, that is, $$\max_{\ket{\psi}_{AB}}F_q(\ket{\psi}_{AB})=\frac{d^{q-1}-1}{d^{q-1}},$$
 here $d$ is the dimension of the smaller system. \par
Next we review the knowledge needed on multipartite entanglement.\par
An $n$-partite pure state $\ket{\psi}_{A_1A_2\cdots A_n}$ is full product if it can be written as
\begin{align}
\ket{\psi}_{A_1A_2\cdots A_n}=\ket{\phi_1}_{A_1}\ket{\phi_2}_{A_2}\cdots \ket{\phi_n}_{A_n},
\end{align}
otherwise, it is entangled. A multipartite pure state is called
genuinely entangled if
\begin{align}
\ket{\psi}_{A_1A_2\cdots A_n}\ne \ket{\phi}_S\ket{\zeta}_{\overline{S}}, \label{bs}
\end{align}
for any partite $S|\overline{S}$, here $S$ is a subset of $\boldsymbol{A}=\{A_1,A_2,\cdots, A_n\}$, and $\overline{S}=\boldsymbol{A}-S$. An $n$ partite mixed state $\rho$ is biseparable if it can be written as a convex combination of biseparable pure states $\rho=\sum_i p_i\ket{\psi_i}\bra{\psi_i},$ where the pure states in $\{\ket{\psi_i}\}$ can be biseparable with respect to some bipartitions. If an $n$-partite state is not biseparable, then it is genuinely entangled. \par
Then we recall the necessary conditions of a genuine multipartite entanglement (GME) measure $E$ should satisfy \cite{ma2011measure}:
\begin{itemize}
\item[1]. it is entanglement monotone.
\item[2]. $E(\rho)=0,$ if $\rho$ is biseparable.
\item[3]. $E(\rho)>0$, if $\rho$ is genuinely entangled state.
\end{itemize}\par
Based on the $q$-concurrence, we present a GME measure in terms of the geometric methods.
 \begin{definition}
Assume $\ket{\psi}_{A_1A_2\cdots A_n}$ is an $n$-partite pure state, the geometric mean of $q$-concurrence $(GqC)$ is defined as
\begin{align}
\mathcal{G}_q(\ket{\psi})=[\mathcal{P}_q(\ket{\psi})]^{\frac{1}{c(\alpha)}},
\end{align}
where $\alpha=\{\alpha_i\}$ is the set that denotes all possible bipartitions $\{A_{\alpha_i}|B_{\alpha_i}\}$ of the $n$ parties, $c(\alpha)$ is the cardinality of $\alpha$, and $\mathcal{P}_q(\ket{\psi})$ is
$$
\mathcal{P}_q(\ket{\psi})=\Pi_{\alpha_i\in\alpha}{\mathcal{C}_q}_{A_{\alpha_i}B_{\alpha_i}}(\ket{\psi}),$$
\begin{align*}
	\begin{split}
	c(\alpha)=\left\{
	\begin{array}{lr}
\sum_{m=1}^{\frac{n-1}{2}}C_n^m,\hspace{3mm} \textit{if $n$ is odd,}\\
\sum_{m=1}^{\frac{n-2}{2}}C_n^m+\frac{1}{2}C_n^{\frac{n}{2}},\hspace{3mm}\textit{if $n$ is even.}
\end{array}
\right.
\end{split}
\end{align*}
\par
When $\rho$ is an $n$-partite mixed state, 
\begin{align}
\mathcal{G}_q(\rho)=\min\sum_i p_i\mathcal{G}_q(\ket{\psi_i}),
\end{align}
where the minimum takes over all the decompositions of $\rho=\sum_i p_i\ket{\psi_i}\bra{\psi_i}.$
 \end{definition}

\section{MAIN RESULTS}
\indent In this section, we present the main results of this article. In Sec. \ref{s1},  we present the properties of the GqC. In Sec. \ref{s2}, a lower bound of the GqC for an $n$-partite mixed state $\rho$ was presented. In Sec. \ref{s3}, we make a comparison of the GqC with other GME measures.
\subsection{The properties of GqC}\label{s1}
Here we first show that the GqC is a GME measure.
\begin{Theorem}
	For an arbitrary $n$-partite quantum state $\ket{\psi}_{A_1A_2\cdots A_n}$, the GqC is a GME measure.
\end{Theorem}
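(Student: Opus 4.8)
The plan is to verify the three defining conditions of a GME measure (entanglement monotonicity, vanishing on biseparable states, and strict positivity on genuinely entangled states) for the quantity $\mathcal{G}_q$.

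First I would dispose of conditions 2 and 3, which should follow directly from the multiplicative structure of $\mathcal{P}_q$. For a pure state $\ket{\psi}$, note that $\mathcal{G}_q(\ket{\psi})=0$ if and only if $\mathcal{P}_q(\ket{\psi})=0$, i.e. if and only if at least one factor ${\mathcal{C}_q}_{A_{\alpha_i}B_{\alpha_i}}(\ket{\psi})$ vanishes. Since $\mathcal{C}_q$ is faithful (it is built from $F_q(\rho_{A_{\alpha_i}})$, which by the non-negativity in Lemma 1(i) vanishes exactly when the reduced state is pure, i.e. when $\ket{\psi}$ is product across that cut), a single vanishing factor means $\ket{\psi}$ is product across some bipartition, hence biseparable. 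Conversely, if $\ket{\psi}$ is genuinely entangled then it is entangled across every bipartition, so every factor is strictly positive and the geometric mean is strictly positive. Extending to mixed states via the convex-roof definition, $\mathcal{G}_q(\rho)=0$ precisely when $\rho$ admits a decomposition into biseparable pure states, which is exactly the definition of biseparability; and a genuinely entangled $\rho$ (not biseparable) forces every decomposition to contain a genuinely entangled pure state, giving $\mathcal{G}_q(\rho)>0$.

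The substantive part is condition 1, the entanglement monotonicity: $\mathcal{G}_q$ must not increase on average under LOCC. Here I would invoke the standard Vidal framework: for a convex-roof extended quantity it suffices to establish monotonicity on pure states under local operations, namely that $\mathcal{G}_q(\ket{\psi})\ge \sum_k p_k\,\mathcal{G}_q(\ket{\phi_k})$ whenever $\{(p_k,\ket{\phi_k})\}$ arises from a local measurement on $\ket{\psi}$, and then combine this with the convex-roof construction (which is automatically convex) to obtain full LOCC monotonicity for mixed states. For each fixed bipartition $\alpha_i$, the $q$-concurrence $\mathcal{C}_q$ is already known to be an entanglement monotone across that cut, so the single factors behave correctly; the issue is that $\mathcal{G}_q$ is a \emph{geometric mean} (a product raised to the power $1/c(\alpha)$) of these factors, and monotonicity of a product of monotones under a \emph{common} local operation is not automatic.

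I expect the main obstacle to be exactly this last point: controlling the geometric mean under the same local Kraus operators acting simultaneously in all the bipartite pictures. The natural tool is to combine the per-cut monotonicity of each $\mathcal{C}_q$ with the supermultiplicativity / concavity of the map $t\mapsto t^{1/c(\alpha)}$ and an averaging (Hölder- or AM--GM-type) inequality, so that the average decrease in each factor lifts to a decrease in their geometric mean. Concretely I would try to show $\sum_k p_k \prod_i \big({\mathcal{C}_q}_{\alpha_i}(\ket{\phi_k})\big)^{1/c(\alpha)} \le \prod_i \big(\sum_k p_k {\mathcal{C}_q}_{\alpha_i}(\ket{\phi_k})\big)^{1/c(\alpha)} \le \prod_i \big({\mathcal{C}_q}_{\alpha_i}(\ket{\psi})\big)^{1/c(\alpha)}$, where the first inequality is a generalized Hölder/superadditivity estimate for geometric means and the second is the per-cut monotonicity. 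Verifying that the Hölder step applies cleanly, and that the convex-roof extension preserves the resulting pure-state monotonicity, is the delicate part that will require the most care.
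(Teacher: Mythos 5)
Your proposal is correct, and for conditions 2 and 3 it is essentially the paper's own argument: a biseparable pure state annihilates one factor of the product $\mathcal{P}_q$, a genuinely entangled pure state makes every bipartite factor strictly positive, and the convex roof carries both statements to mixed states. Where you genuinely diverge is condition 1: the paper does not prove monotonicity at all---it declares the proof ``similar to the proof in \cite{li2022geometric}'' and omits it---whereas you sketch the actual argument, and your sketch is sound. For a local measurement on a single party, that party sits entirely on one side of \emph{every} bipartition $\alpha_i$, so the bipartite monotonicity of $\mathcal{C}_q$ (established in the $q$-concurrence paper) applies cut by cut, and your chain
$\sum_k p_k \prod_i \mathcal{C}_{q,\alpha_i}(\ket{\phi_k})^{1/c(\alpha)} \le \prod_i \bigl(\sum_k p_k\, \mathcal{C}_{q,\alpha_i}(\ket{\phi_k})\bigr)^{1/c(\alpha)} \le \prod_i \mathcal{C}_{q,\alpha_i}(\ket{\psi})^{1/c(\alpha)}$
is valid: the first step is the generalized H\"older inequality with all exponents equal to $c(\alpha)$, the second is per-cut monotonicity, and Vidal's convex-roof construction then lifts pure-state monotonicity to full LOCC monotonicity for mixed states. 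So the step you flagged as ``delicate'' does go through cleanly, and your proposal in fact does more work than the paper's proof, which outsources its hardest part to a citation. One caveat that applies equally to you and to the paper: concluding $\mathcal{G}_q(\rho)>0$ for a genuinely entangled mixed state requires that the convex-roof minimum is attained (a standard compactness/Carath\'eodory argument), since an infimum of strictly positive terms could otherwise be zero; both you and the paper pass over this silently.
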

\begin{proof}
	Here we show that the GqC satisfies the properties $2$ and $3$. The proof of the property 1 is similar to the proof in \cite{li2022geometric}, and we omit it here.\par
	Assume $\ket{\psi}$ is a biseparable pure state, then according to the definition of biseparable states in Sec. \ref{se2}, there exists a partition $S|\overline{S}$ of $\boldsymbol{A}$ such tat $\ket{\psi}=\ket{\phi_1}_S\ket{\phi_2}_{\overline{S}}$, $\mathcal{C}_q(\ket{\psi}_{S|\overline{S}})=0$, thus we have $\mathcal{G}_q(\ket{\psi})=0.$ And due to the definition of the GqC for mixed states, when a mixed state $\rho$ is biseparable, $\mathcal{G}_q(\rho)=0,$ Hence we present the proof of condition 1. \par
As a GME pure state $\ket{\psi}$ can be written as (\ref{bs}), that is, $\ket{\psi}$ cannot be written as product states with respect to any bipartition, then we have all the bipartite $\mathcal{C}_q$ is bigger than 0, so $\mathcal{G}_q(\ket{\psi})>0.$ And due to the definition of a mixed GME state $\rho$ and the definition of GqC, we have when $\rho$ is GME, $\mathcal{G}_q(\rho>0),$ then we prove the condition $2.$
 \end{proof}
\par
A pure multipartite entangled state is called absolutely maximally entangled state (AMES) if all reduced density operators obtained by tracing out at least half of the particles of the pure state are maximally mixed \cite{helwig2012absolute}. The AMES can be used to develop the quantum secret sharing shemes \cite{helwig2013absolutely} and quantum error correction codes \cite{grassl2015quantum,alsina2021absolutely}.  And the GHZ state is the only AMES up to the local unitary operations in three qubit systems \cite{goyeneche2015absolutely,shi2021multilinear}.  Due to the definitions and properties of $\mathcal{C}_q(\ket{\psi})$,  when $\ket{\psi}$ is an AMES, then $\mathcal{G}_q(\cdot)$ gets the maximum, that is, it can also be seen as a proper GME measure \cite{li2022geometric}.\\
\indent Next we consider two pure states in multipartite systems that are inequivalent in terms of stochastic LOCC (SLOCC), the W states and GHZ states.  The authors in \cite{joo2003quantum} showed that in three qubit systems, a perfect teleportation can be performed via the GHZ
state, while the W state cannot. Moreover, for a $k$-partite W and GHZ state ($k\ge 3$), the infimum asymptotic ratio from GHZ to W is 1, however, the infimum asymptotic ratio from W to GHZ is bigger than 1 \cite{vrana2015asymptotic}. Thus the GHZ states can be thought more entangled than the W states.
\begin{example}
	\begin{align*}
	\ket{W_n}=&\frac{1}{\sqrt{n}}(\ket{10\cdots0}+\ket{01\cdots0}+\cdots+\ket{00\cdots1}),\\
	\ket{GHZ_n}=&\frac{1}{\sqrt{2}}(\ket{00\cdots0}+\ket{11\cdots1}),
	\end{align*}
\par
Here we place the results on $\mathcal{G}_q(\ket{W_n})$ and $\mathcal{G}_q(\ket{GHZ_n})$ in Sec. \ref{ap1}. In Fig. \ref{fig3}, we can see the values of $\frac{\mathcal{G}_3(\ket{W_n})}{\mathcal{G}_3(\ket{GHZ_n})}$ tends to 1 with the increase of $n$.
	\begin{figure}
		\centering
		\includegraphics[width=90mm]{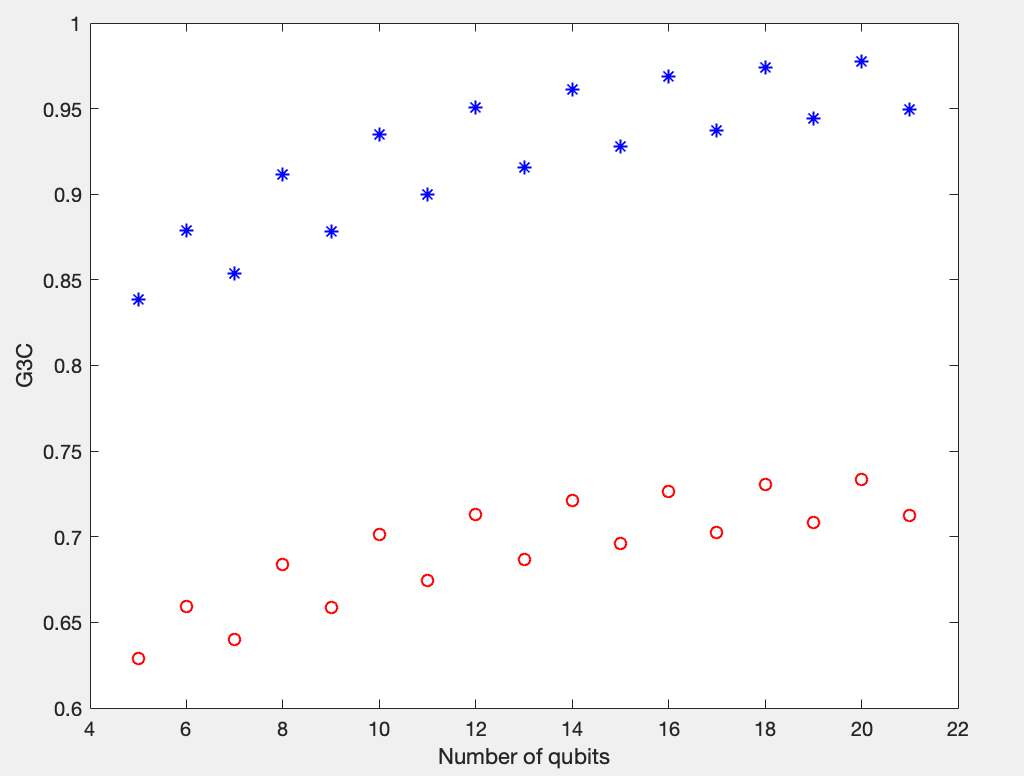}\\
		\caption{In this figure, we plot the $\mathcal{G}_3(\ket{W_n})$ with the red circle and $\frac{\mathcal{G}_3(\ket{W_n})}{\mathcal{G}_3(\ket{GHZ_n})}$ with the star when $n$ from $5$ to 21.}\label{fig3}
	\end{figure}
\end{example}
\par
Next we show the GqC  satisfies the subadditivity and continuity for pure states. First we prove the
subadditivity.\par 
Assume $\ket{\psi_1}$ and $\ket{\psi_2}$ are pure states in a bipartite system $\mathcal{H}_{d_1}\otimes\mathcal{H}_{d_2},$ then we have 
\begin{align*}
&\mathcal{C}_q(\ket{\psi_1}\otimes\ket{\psi_2})-\mathcal{C}_q(\ket{\psi_1})-\mathcal{C}_q(\ket{\psi_2})\nonumber\\
&= 1-\tr\rho_A^q\tr\sigma_A^q-1+\tr\rho_A^q-1+\tr\sigma_A^q\nonumber\\
&= -(1-\tr\rho_A^q)(1-\tr\sigma_A^q)\le0,
\end{align*}
here we denote $\rho_A=tr_B\ket{\psi_1}_{AB}\bra{\psi_1},$ $\sigma_A=tr_B\ket{\psi_2}_{AB}\bra{\psi_2}$. Due to the above inequality, we have $\mathcal{C}_q(\cdot)$ is subadditivity for pure states in arbitrary dimensional bipartite systems.
\begin{align}
\mathcal{C}_q(\ket{\psi_1}\otimes\ket{\psi_2})\le \mathcal{C}_q(\ket{\psi_1})+\mathcal{C}_q(\ket{\psi_2}).\label{sub}
\end{align}
Then we show that $\mathcal{G}_q(\cdot)$ satisfies the subadditivity property for pure states.
\par Assume $\ket{\psi_1}$ and $\ket{\psi_2}$ are two pure states in $n$-partite systems, then we have
\begin{widetext}
\begin{align}
&(\mathcal{G}_q(\ket{\psi_1}\otimes\ket{\psi_2})^{c(\alpha)}-(\mathcal{G}_q(\ket{\psi_1})+\mathcal{G}_q(\ket{\psi_2}))^{c(\alpha)}\nonumber\\
=&\Pi_{\alpha_i\in\alpha}\mathcal{C}_{qA_{\alpha_i}B_{\alpha_i}}(\ket{\psi_1}\otimes\ket{\psi_2})-[(\Pi_{\alpha_i\in\alpha}\mathcal{C}_{qA_{\alpha_i}B_{\alpha_i}}(\ket{\psi_1}))^{\frac{1}{c(\alpha)}}+(\Pi_{\alpha_i\in\alpha}\mathcal{C}_{qA_{\alpha_i}B_{\alpha_i}}(\ket{\psi_2}))^{\frac{1}{c(\alpha)}}]^{c(\alpha)}\nonumber\\
\le&\Pi_{\alpha_i\in\alpha}(\mathcal{C}_{qA_{\alpha_i}B_{\alpha_i}}(\ket{\psi_1})+\mathcal{C}_{qA_{\alpha_i}B_{\alpha_i}}(\ket{\psi_2}))-[(\Pi_{\alpha_i\in\alpha}\mathcal{C}_{qA_{\alpha_i}B_{\alpha_i}}(\ket{\psi_1}))^{\frac{1}{c(\alpha)}}+(\Pi_{\alpha_i\in\alpha}\mathcal{C}_{qA_{\alpha_i}B_{\alpha_i}}(\ket{\psi_2}))^{\frac{1}{c(\alpha)}}]^{c(\alpha)}\nonumber\\
\le&0, \label{subp}
\end{align}
\end{widetext}
here the first inequality is due to the subadditivity of the bipartite entanglement measure $(\ref{sub}),$ the second inequality is due to the Mahler's inequality.  Due to the inequality (\ref{subp}), we have $\mathcal{G}_q(\cdot)$ satisfies the subadditivity for pure states.
\par
At last, we present that the GqC satisfies continuity for pure states. First we present a lemma on the $\mathcal{C}_q(\cdot)$ of pure states.
\begin{Lemma}\label{l0}
	Assume $\ket{\psi_1}$ and $\ket{\psi_2}$ are pure states in $\mathcal{H}_{d}\otimes\mathcal{H}_{d},$ when $\norm{\ket{\psi_1}-\ket{\psi_2}}_1\le \epsilon$, then we have
	\begin{align}
	|\mathcal{C}_q(\ket{\psi_1})-\mathcal{C}_q(\ket{\psi_2})|\le d[(1+\frac{\epsilon}{d})^q-1].
	\end{align}
\end{Lemma}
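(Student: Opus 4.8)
The plan is to reduce the statement to a perturbation estimate for the purity-type functional $\rho\mapsto\tr\rho^q$ on the reduced states, and then to control that by an eigenvalue argument. By the definition of $\mathcal{C}_q$ for pure states we have $\mathcal{C}_q(\ket{\psi_j})=F_q(\rho^{(j)})=1-\tr(\rho^{(j)})^q$, where $\rho^{(1)}=\tr_B\proj{\psi_1}$ and $\rho^{(2)}=\tr_B\proj{\psi_2}$ are the reduced states on the first subsystem. Hence $\abs{\mathcal{C}_q(\ket{\psi_1})-\mathcal{C}_q(\ket{\psi_2})}=\abs{\tr(\rho^{(1)})^q-\tr(\rho^{(2)})^q}$, so the whole problem becomes a continuity bound for $\tr(\cdot)^q$ on $d\times d$ density matrices.

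First I would propagate the closeness of the global vectors to closeness of the reduced states. Writing $\ket{\delta}=\ket{\psi_1}-\ket{\psi_2}$, the identity $\proj{\psi_1}-\proj{\psi_2}=\ket{\delta}\bra{\psi_2}+\ket{\psi_2}\bra{\delta}+\ket{\delta}\bra{\delta}$, together with the triangle inequality and the contractivity of the partial trace under the trace norm, gives $\norm{\rho^{(1)}-\rho^{(2)}}_1\le\norm{\proj{\psi_1}-\proj{\psi_2}}_1$, which is in turn controlled by $\norm{\ket{\delta}}\le\epsilon$. I would then pass to the sorted eigenvalues $\{p_i\}$ of $\rho^{(1)}$ and $\{s_i\}$ of $\rho^{(2)}$; this is the step that neutralizes the mismatch between the two Schmidt bases, since a Weyl/Mirsky-type perturbation theorem bounds the displacement of each eigenvalue in terms of $\norm{\rho^{(1)}-\rho^{(2)}}$, targeting a per-mode estimate $\abs{p_i-s_i}\le\epsilon/d$.

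With the per-mode control in hand, the final step is purely scalar. Since $q\ge2$, the map $s\mapsto(s+c)^q-s^q$ has nonnegative derivative $q[(s+c)^{q-1}-s^{q-1}]$ on $[0,1]$ for every $c\ge0$, so it is nondecreasing, whence $(s+c)^q-s^q\le(1+c)^q-1$ for all $s\in[0,1]$. Applying this with $c=\epsilon/d$ to each eigenvalue and summing,
\begin{align*}
\abs{\tr(\rho^{(1)})^q-\tr(\rho^{(2)})^q}&\le\sum_{i=1}^d\abs{p_i^q-s_i^q}\\
&\le\sum_{i=1}^d\big[(1+\tfrac{\epsilon}{d})^q-1\big]=d\big[(1+\tfrac{\epsilon}{d})^q-1\big],
\end{align*}
which is exactly the claimed inequality.

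The main obstacle I anticipate is the eigenvalue-perturbation step: turning the single scalar $\epsilon$ into a \emph{uniform} per-mode displacement $\abs{p_i-s_i}\le\epsilon/d$, while handling the non-commutativity of $\rho^{(1)},\rho^{(2)}$ and the misalignment of their Schmidt bases. Passing to sorted spectra via Mirsky's inequality removes the basis ambiguity, but the uniform $1/d$ scaling is genuinely what the stated form needs: because $x\mapsto x^q$ is convex, a merely global budget $\sum_i\abs{p_i-s_i}\le\epsilon$ would be maximized by concentrating the displacement on a single mode and would only deliver the coarser estimate $(1+\epsilon)^q-1$, so landing precisely on $d[(1+\tfrac{\epsilon}{d})^q-1]$ hinges on the equidistributed per-eigenvalue bound. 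Verifying that the perturbation theory supplies this uniform estimate, rather than only the global trace-norm bound, is where I expect the real care to be required.
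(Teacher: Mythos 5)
Your overall route is the same as the paper's: reduce $\abs{\mathcal{C}_q(\ket{\psi_1})-\mathcal{C}_q(\ket{\psi_2})}$ to $\abs{\tr(\rho^{(1)})^q-\tr(\rho^{(2)})^q}$, use contractivity of the partial trace and a Mirsky/Lidskii inequality to pass to sorted spectra, and finish with the scalar estimate $(s+c)^q-s^q\le(1+c)^q-1$. However, the step you yourself flag as the ``main obstacle'' is not merely delicate --- it is false, so the proposal cannot be completed as written. No perturbation theorem supplies a uniform per-mode displacement $\abs{p_i-s_i}\le\epsilon/d$: Weyl gives only $\abs{p_i-s_i}\le\norm{\rho^{(1)}-\rho^{(2)}}_\infty\le\epsilon$ for each $i$, and Mirsky (trace-norm version) gives only the global budget $\sum_i\abs{p_i-s_i}\le\norm{\rho^{(1)}-\rho^{(2)}}_1$. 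A concrete counterexample: take $\ket{\psi_1}$ and $\ket{\psi_2}$ with the same Schmidt bases and sorted reduced spectra $p=(\tfrac12+\tfrac{\epsilon}{2},\tfrac12-\tfrac{\epsilon}{2},0,\dots,0)$ and $s=(\tfrac12,\tfrac12,0,\dots,0)$; then $\norm{\ket{\psi_1}-\ket{\psi_2}}\approx\epsilon/2\le\epsilon$ and $\norm{\rho^{(1)}-\rho^{(2)}}_1=\epsilon$, yet two eigenvalues each move by $\epsilon/2>\epsilon/d$ whenever $d>2$. Your closing observation is therefore exactly right: from the tools actually available (global budget plus convexity of $x\mapsto x^q$), concentration is the worst case and one only reaches the coarser bound $(1+\epsilon)^q-1$, not $d[(1+\tfrac{\epsilon}{d})^q-1]$.

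It is worth saying that your diagnosis pinpoints precisely where the paper's own proof also stumbles: after deriving $\abs{p_i^q-r_i^q}\le(1+\epsilon_i)^q-1$ with $\sum_i\epsilon_i\le\epsilon$, the paper asserts $\sum_i[(1+\epsilon_i)^q-1]\le d[(1+\tfrac{\epsilon}{d})^q-1]$ with equality at $\epsilon_i=\epsilon/d$; but since $g(x)=(1+x)^q-1$ is convex with $g(0)=0$, equidistribution \emph{minimizes} $\sum_i g(\epsilon_i)$ over the budget simplex (the maximum sits at a vertex and equals $g(\epsilon)$), so that inequality runs the wrong way. The lemma itself is nonetheless true, and can be rescued by an argument more elementary than either attempt: since $0\le p_i,r_i\le1$ and $q\ge2$, the mean value theorem gives $\abs{p_i^q-r_i^q}\le q\abs{p_i-r_i}$, hence $\sum_i\abs{p_i^q-r_i^q}\le q\epsilon$, while Bernoulli's inequality gives $d[(1+\tfrac{\epsilon}{d})^q-1]\ge d\cdot q\tfrac{\epsilon}{d}=q\epsilon$. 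If you replace your per-mode perturbation step with this Lipschitz bound, the rest of your write-up becomes a complete and correct proof of the stated estimate.
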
 \par
 Then we can generalize the results to the GqC for the pure states.
\begin{Theorem}\label{th6}
	Assume $\ket{\psi_1}$ and $\ket{\psi_2}$ are two pure states in $n$-partite systems $\mathcal{H}_{d}\otimes\mathcal{H}_{d}\otimes\cdots \otimes\mathcal{H}_{d}$, here $\norm{\ket{\psi_1}-\ket{\psi_2}}\le \epsilon,$ then we have
	\begin{align}
	|\mathcal{G}_q(\ket{\psi_1})-\mathcal{G}_q(\ket{\psi_2})|\le [\sum_{i=1}^{\frac{n-1}{2}}C_n^id^i[(1+\frac{\epsilon}{d^i})^q-1]]^{\frac{1}{c(\alpha)}}
	\end{align}
\end{Theorem}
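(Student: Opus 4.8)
The plan is to reduce the continuity estimate for $\mathcal{G}_q$ to the per-bipartition bound of Lemma \ref{l0} together with two elementary inequalities for the geometric mean. Write $m=c(\alpha)$ for the number of bipartitions, and for each bipartition $\alpha_k$ set $a_k=\mathcal{C}_{q,A_{\alpha_k}B_{\alpha_k}}(\ket{\psi_1})$ and $b_k=\mathcal{C}_{q,A_{\alpha_k}B_{\alpha_k}}(\ket{\psi_2})$, so that $\mathcal{G}_q(\ket{\psi_1})=(\Pi_k a_k)^{1/m}$ and $\mathcal{G}_q(\ket{\psi_2})=(\Pi_k b_k)^{1/m}$. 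Because each $\mathcal{C}_q=F_q(\rho_A)=1-\tr\rho_A^q$ lies in $[0,1]$, all the $a_k$ and $b_k$ lie in $[0,1]$, and it is precisely this confinement that makes the geometric-mean manipulations valid.

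First I would control each factor separately. Since grouping parties into a bipartition does not change the global vector, the hypothesis $\norm{\ket{\psi_1}-\ket{\psi_2}}\le\epsilon$ holds across every cut, so Lemma \ref{l0}, applied with the smaller-subsystem dimension (extending it from $\mathcal{H}_d\otimes\mathcal{H}_d$ to $\mathcal{H}_{d^i}\otimes\mathcal{H}_{d^{n-i}}$, which is legitimate because $\mathcal{C}_q$ depends only on the smaller reduced state), gives for a bipartition whose smaller part has $i$ parties the bound $|a_k-b_k|\le d^i[(1+\epsilon/d^i)^q-1]$. Summing over all bipartitions and recalling that exactly $C_n^i$ of them have smaller part of size $i$ (with $1\le i\le (n-1)/2$ for odd $n$), I obtain $\sum_k|a_k-b_k|\le \sum_{i=1}^{(n-1)/2}C_n^i d^i[(1+\epsilon/d^i)^q-1]$.

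The core step is then to bound the difference of geometric means by the sum of the factorwise differences, namely $|(\Pi_k a_k)^{1/m}-(\Pi_k b_k)^{1/m}|\le |\Pi_k a_k-\Pi_k b_k|^{1/m}\le (\sum_k|a_k-b_k|)^{1/m}$. The right inequality is a telescoping estimate: writing $\Pi_k a_k-\Pi_k b_k=\sum_j (b_1\cdots b_{j-1})(a_j-b_j)(a_{j+1}\cdots a_m)$ and using that every factor lies in $[0,1]$ bounds each coefficient by $1$, giving $|\Pi_k a_k-\Pi_k b_k|\le \sum_j|a_j-b_j|$. The left inequality is the concavity bound $|A^{1/m}-B^{1/m}|\le |A-B|^{1/m}$ for $A,B\ge 0$ and $m\ge 1$, which follows from the comparison $(u^m+v^m)^{1/m}\le u+v$ applied with $u=\min(A,B)^{1/m}$ and $v=|A-B|^{1/m}$. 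Chaining these two inequalities with the summed Lemma \ref{l0} bound yields exactly the claimed estimate.

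The main obstacle is this core step: the naive guess $|G_a-G_b|\le(\sum_k|a_k-b_k|)^{1/m}$ is false for unbounded entries (for instance $a_k=4,\ b_k=0$ violates it), so the argument genuinely relies on all $q$-concurrences lying in $[0,1]$ to force the telescoping coefficients to be $\le 1$. A secondary point to verify carefully is the extension of Lemma \ref{l0} to cuts with unequal marginals $d^i\ne d^{n-i}$; since $\mathcal{C}_q$ is a function of the smaller marginal only, the same proof applies with $d$ replaced by $d^i$, but this should be stated explicitly rather than left implicit.
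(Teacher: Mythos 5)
Your proposal is correct and follows essentially the same route as the paper's own proof: bound $|A^{1/c(\alpha)}-B^{1/c(\alpha)}|\le|A-B|^{1/c(\alpha)}$, telescope the product difference into $\sum_k|a_k-b_k|$ using that all $q$-concurrences lie in $[0,1]$, and then invoke Lemma \ref{l0} cut by cut. Your write-up is in fact slightly more careful than the paper's, which leaves implicit both the extension of Lemma \ref{l0} to cuts with unequal marginal dimensions $d^i\ne d^{n-i}$ and the justification of the power inequality.
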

Here we place the proof of the lemma \ref{l0} and Theorem \ref{th6} in Sec. \ref{ap4}.

\subsection{A lower bound of GqC}\label{s2}
\indent In this subsection, we first present a lower bound of the entanglement measure $\mathcal{C}_q(\rho_{AB})$ for a bipartite mixed state $\rho_{AB}$, then we extend the results to the GqC for multipartite mixed states.  
\begin{Theorem}\label{th4}
For a bipartite mixed state $\rho$ on the system $\mathcal{H}_{m}\otimes\mathcal{H}_{n} (m\le n)$, $\ket{\phi}$ is an arbitrary pure state in $\mathcal{H}_{m}\otimes\mathcal{H}_{n}$ its revised parametrized entanglement measure $\mathcal{C}_q(\rho)$ satisfies 
\begin{align}
\mathcal{C}_q(\rho)\ge {co[R(\Lambda)]},
\end{align}
where $\Lambda=\max\{\frac{\bra{\phi}\rho\ket{\phi}}{s_1m},\frac{1}{m}\}$,  $R(\Lambda)=1-\gamma(\Lambda)^q-\frac{[1-\gamma(\Lambda)]^q}{(m-1)^{q-1}},$ with $\gamma(\Lambda)=\frac{\sqrt{\Lambda}+\sqrt{(m-1)(1-\Lambda)}}{m}$, and $$co[R(\Lambda)]=\frac{m^{q-1}-1}{m^{q-2}(m-1)}(\Lambda-\frac{1}{m}).$$
\end{Theorem}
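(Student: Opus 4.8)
The plan is to reduce the bound to pure states via the convex-roof definition of $\mathcal{C}_q(\rho)$, establish a pure-state estimate of the form $\mathcal{C}_q(\ket{\psi})\ge R(\Lambda)$, replace $R$ by its convex (here linear) minorant $co[R(\Lambda)]$ anchored at the biseparable threshold $\Lambda=1/m$, and then lift the inequality to mixed states using that the fidelity $\bra{\phi}\rho\ket{\phi}$ is linear in $\rho$. Throughout, $s_1$ denotes the largest Schmidt coefficient of the fixed state $\ket{\phi}$; the normalization by $s_1 m$ is chosen precisely so that $\Lambda=1/m$ is the largest value of $\bra{\phi}\rho\ket{\phi}/(s_1 m)$ reachable by product states, which is why that point plays the role of the separable threshold.

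For a pure state $\ket{\psi}$ with squared Schmidt coefficients $\mu_1\ge\cdots\ge\mu_m$ across the $m|n$ cut we have $\mathcal{C}_q(\ket{\psi})=F_q(\rho_A)=1-\sum_i\mu_i^q$. Expanding $\ket{\phi}$ and $\ket{\psi}$ in their Schmidt bases and applying Cauchy--Schwarz gives $|\braket{\phi}{\psi}|\le\sum_k\sqrt{s_k\mu_k}\le\sqrt{s_1}\sum_k\sqrt{\mu_k}$; since $\bra{\phi}\rho\ket{\phi}=|\braket{\phi}{\psi}|^2=s_1 m\Lambda$ this collapses the fidelity into the single scalar constraint $\sum_k\sqrt{\mu_k}\ge\sqrt{m\Lambda}$. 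The sharpest lower bound on $\mathcal{C}_q$ is therefore $1-\Theta$, where $\Theta=\max\{\sum_i\mu_i^q:\sum_i\mu_i=1,\ \sum_i\sqrt{\mu_i}\ge\sqrt{m\Lambda}\}$. As $x\mapsto x^q$ is convex for $q\ge 2$, this maximum is attained at an extreme point carrying at most two distinct Schmidt weights; a Lagrange/KKT analysis selects the configuration $(\gamma,\tfrac{1-\gamma}{m-1},\ldots,\tfrac{1-\gamma}{m-1})$ with largest weight $\gamma=\gamma(\Lambda)$ pinned by the active constraint. Solving that constraint yields the closed form $\gamma(\Lambda)$ recorded in the statement, and substituting the configuration into $1-\sum_i\mu_i^q$ produces exactly $R(\Lambda)=1-\gamma(\Lambda)^q-[1-\gamma(\Lambda)]^q/(m-1)^{q-1}$, so $\mathcal{C}_q(\ket{\psi})\ge R(\Lambda)$.

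I would then convexify. One checks that $R$ is concave in $\Lambda$ on $[1/m,1]$ with $R(1)=(m^{q-1}-1)/m^{q-1}$, but $R(1/m)>0$, so $R$ itself over-counts near the threshold (product states sit at $\Lambda=1/m$ yet have $\mathcal{C}_q=0$). Replacing $R$ by the chord through $(1/m,0)$ and $(1,R(1))$, namely $co[R(\Lambda)]=\frac{m^{q-1}-1}{m^{q-2}(m-1)}(\Lambda-\frac{1}{m})$, gives a convex function that lies below $R$ on $[1/m,1]$ by concavity and vanishes at the threshold; hence $\mathcal{C}_q(\ket{\psi})\ge co[R(\Lambda)]$ for every pure state, including separable ones. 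Finally, for a mixed state take an optimal decomposition $\rho=\sum_i p_i\proj{\psi_i}$ and set $\Lambda_i=\max\{|\braket{\phi}{\psi_i}|^2/(s_1 m),1/m\}$. Then $\mathcal{C}_q(\rho)=\sum_i p_i\,\mathcal{C}_q(\ket{\psi_i})\ge\sum_i p_i\,co[R(\Lambda_i)]$, and because $co[R(\cdot)]$ is linear with positive slope while $\sum_i p_i\Lambda_i\ge\max\{\bra{\phi}\rho\ket{\phi}/(s_1 m),1/m\}=\Lambda$ (using linearity of the fidelity together with $\Lambda_i\ge 1/m$), the right-hand side is at least $co[R(\Lambda)]$, which is the assertion.

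The main obstacle is the pure-state optimization: distilling the fidelity into one constraint on the Schmidt vector and proving rigorously that the convex objective $\sum_i\mu_i^q$ is maximized by the single-spike/uniform-tail configuration, so that the explicit $R(\Lambda)$ is genuinely extremal rather than merely a candidate. Verifying the concavity of $R$ as a function of $\Lambda$ (needed to license the chord as a valid minorant) is the remaining, more routine, technical step.
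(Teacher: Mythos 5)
Your proposal follows essentially the same route as the paper's proof: reduce to pure states via the convex roof, identify the spike-plus-uniform-tail Schmidt vector as the extremal configuration defining $R(\Lambda)$, replace the concave increasing $R$ by its linear minorant $co[R(\Lambda)]$ anchored at $(\tfrac{1}{m},0)$, and lift to mixed states using linearity of the chord and of the fidelity. The one step you flag as the main obstacle---rigorously proving that the spike-plus-uniform-tail configuration is extremal---is exactly the step the paper disposes of by citing \cite{terhal2000entanglement,berry2003bounds} rather than proving, while your explicit Cauchy--Schwarz link between $\bra{\phi}\rho\ket{\phi}$ and $\sum_k\sqrt{\mu_k}$ (and the resulting inequality $\sum_j p_j\lambda^j\ge\Lambda$) supplies a justification that the paper's final chain of inequalities leaves implicit.
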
\par
We place the proof of this theorem in the Appendix \ref{ap2}. \par 
Before generalizing the bound for $\mathcal{C}_q(\rho_{AB})$ of a bipartite state $\rho_{AB}$ to $\mathcal{G}_q(\rho_{A_1A_2\cdots A_n})$ of a multipartite mixed state $\rho_{A_1A_2\cdots A_n}$, we will denote some definitions. Let $\ket{\psi}_{A_1A_2\cdots A_n}$ be an arbitrary $n$-partite pure state in $\mathcal{H}_{A_1}\otimes\mathcal{H}_{A_2}\otimes\cdots\otimes \mathcal{H}_{A_n},$ $\alpha_i$ be a possible bipartition of $\{A_1,A_2,\cdots,A_n\},$ then by the Schmidt decomposition, $\ket{\psi}_{A_1A_2\cdots A_n}$ can be written as $\ket{\psi}=U_{\alpha}\otimes U_{\overline{\alpha}}\sum_{i=1}^{q_{\alpha}}\sqrt{s_{i}^{(\alpha)}}\ket{ii}$ under the bipartition $\alpha|\overline{\alpha},$ here $\{\sqrt{s_{i}^{(\alpha)}}\}$ are the Schmidt coefficients in decreasing order, $m_{\alpha}$ denotes the number of nonzero Schmidt coefficients. Next let $s_1=\max_{\alpha}\{s_1^{\alpha}\},$ $m=\max_{\alpha}\{m_{\alpha}\}.$
\begin{Theorem}\label{th5}
	Assume $\rho_{A_1A_2\cdots A_n}$ is a mixed state on an $n$-paritite system. Then we have
	\begin{align}
\mathcal{G}_q(\rho)\ge \frac{m^{q-1}-1}{m^{q-2}(m-1)}(\Lambda^{'}-\frac{1}{m}),
\end{align}
here we denote $\Lambda^{'}=\max\{\frac{\bra{\phi}\rho\ket{\phi}}{s_1m},\frac{1}{m}\}$
\end{Theorem}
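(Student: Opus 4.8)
The plan is to reduce the multipartite statement to the bipartite bound of Theorem~\ref{th4} by combining two mechanisms: the elementary fact that a geometric mean of nonnegative numbers dominates its smallest factor, and a convex-roof/Jensen argument that transfers a pure-state estimate to mixed states. Throughout, $\ket{\phi}$ is the fixed reference pure state whose Schmidt data across each bipartition $\alpha\mid\overline{\alpha}$ supply $s_1^{(\alpha)}$ and $m_\alpha$, and $s_1=\max_\alpha s_1^{(\alpha)}$, $m=\max_\alpha m_\alpha$.

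First I would establish a pure-state estimate. Fix an arbitrary $n$-partite pure state $\ket{\psi}$ and write $x=\abs{\braket{\phi}{\psi}}^2$. Applying Theorem~\ref{th4} across each bipartition $\alpha$ (a pure state is a special mixed state, and $\bra{\phi}\proj{\psi}\ket{\phi}=x$) gives
\begin{align*}
\mathcal{C}_{q,A_{\alpha}B_{\alpha}}(\ket{\psi})\ge \frac{m_\alpha^{q-1}-1}{m_\alpha^{q-2}(m_\alpha-1)}\Big(\Lambda_\alpha-\tfrac{1}{m_\alpha}\Big),\qquad \Lambda_\alpha=\max\Big\{\tfrac{x}{s_1^{(\alpha)}m_\alpha},\tfrac{1}{m_\alpha}\Big\}.
\end{align*}
A short computation rewrites the right-hand side as $g(m_\alpha)\max\{x/s_1^{(\alpha)}-1,\,0\}$, where $g(m)=\frac{m^{q-1}-1}{m^{q-1}(m-1)}=\frac{1-m^{1-q}}{m-1}$. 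Since $s_1^{(\alpha)}\le s_1$ forces $x/s_1^{(\alpha)}-1\ge x/s_1-1$, and, provided $g$ is non-increasing in $m$ (see below) so that $g(m_\alpha)\ge g(m)$, every factor obeys $\mathcal{C}_{q,A_{\alpha}B_{\alpha}}(\ket{\psi})\ge g(m)\max\{x/s_1-1,\,0\}=:G(x)$. Because each factor of the product $\mathcal{P}_q(\ket{\psi})$ is at least $G(x)\ge 0$, the geometric mean inherits the same bound, giving the pure-state inequality $\mathcal{G}_q(\ket{\psi})\ge G\big(\abs{\braket{\phi}{\psi}}^2\big)$.

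Next I would lift this to mixed states. The function $G(x)=g(m)\max\{x/s_1-1,0\}$ is a nonnegative multiple of a hinge function, hence convex and non-decreasing; moreover $G\big(\bra{\phi}\rho\ket{\phi}\big)$ is exactly the claimed right-hand side, since the active branch $\bra{\phi}\rho\ket{\phi}/s_1>1$ corresponds precisely to $\Lambda'>1/m$ and the inactive branch to $\Lambda'=1/m$. Taking an optimal decomposition $\rho=\sum_i p_i\proj{\psi_i}$ for $\mathcal{G}_q(\rho)$ and using the linearity $\bra{\phi}\rho\ket{\phi}=\sum_i p_i\abs{\braket{\phi}{\psi_i}}^2$, the pure-state bound together with Jensen's inequality for the convex $G$ yields
\begin{align*}
\mathcal{G}_q(\rho)=\sum_i p_i\,\mathcal{G}_q(\ket{\psi_i})\ge \sum_i p_i\,G\big(\abs{\braket{\phi}{\psi_i}}^2\big)\ge G\Big(\sum_i p_i\abs{\braket{\phi}{\psi_i}}^2\Big)=G\big(\bra{\phi}\rho\ket{\phi}\big),
\end{align*}
which is the asserted inequality.

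The main obstacle is the monotonicity claim that $g(m)=\frac{1-m^{1-q}}{m-1}$ is non-increasing in $m$ for $m\ge 2$ and $q\ge 2$: this is what legitimizes replacing each $m_\alpha$ by the global maximum $m$, and it simultaneously forces the two regimes ($\Lambda_\alpha=1/m_\alpha$ versus $\Lambda_\alpha>1/m_\alpha$) to line up consistently when the factors are compared. I would prove it either by verifying $g(m)-g(m+1)\ge 0$ directly or by treating $m$ as a real variable and checking $g'(m)\le 0$; the cases $q=2$ (where $g(m)=1/m$) and $q=3$ (where $g(m)=(m+1)/m^2$) already exhibit the behaviour and guide the general algebra. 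The remaining ingredients—the algebraic rewriting of the Theorem~\ref{th4} bound into the form $g(m_\alpha)\max\{x/s_1^{(\alpha)}-1,0\}$ and the linearity of $\bra{\phi}\rho\ket{\phi}$ in the decomposition—are routine.
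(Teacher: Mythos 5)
Your proposal is correct, and it takes a genuinely different route from the paper's. The paper argues directly on an optimal decomposition $\rho=\sum_j p_j\proj{\phi_j}$: it lower-bounds each geometric mean by the minimum of its factors (via $\min(x,y)\le\sqrt{xy}$), then interchanges the sum over $j$ with the minimum over bipartitions, and only then applies the bipartite bound of Theorem~\ref{th4} to the minimizing bipartition. You instead prove a pure-state estimate first --- every bipartite factor dominates one fixed hinge function $G(x)=g(m)\max\{x/s_1-1,0\}$, using $s_1^{(\alpha)}\le s_1$, $m_\alpha\le m$ and the monotonicity of $g(m)=(1-m^{1-q})/(m-1)$ --- and then lift to mixed states by Jensen's inequality for the convex, nondecreasing $G$. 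Your route buys something real: the paper's interchange step, $\sum_j\min_{\alpha_i}(\cdots)\ge\min_{\alpha_i}\sum_j(\cdots)$, is written in the wrong direction (a sum of minima is bounded \emph{above}, not below, by the minimum of the sums), and your convexity argument replaces precisely this step by a valid one; your route also makes explicit the comparison between the bipartition-dependent data $(m_\alpha,s_1^{(\alpha)})$ and the global $(m,s_1)$, which the paper uses silently. The monotonicity of $g$ that you flag as the main obstacle does hold for $q\ge 2$ and real $m\ge 1$: the numerator of $g'(m)$ is $h(m)=qm^{1-q}-(q-1)m^{-q}-1$, which satisfies $h(1)=0$ and $h'(m)=q(q-1)m^{-q-1}(1-m)\le 0$, so $g'\le 0$ as required. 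The one caveat, which you share with the paper rather than introduce, is the reading of Theorem~\ref{th4}: there $m$ is the smaller subsystem dimension, whereas Theorem~\ref{th5} (and your application of Theorem~\ref{th4} across each bipartition) uses the Schmidt-rank-based $m_\alpha$ of the reference state; this identification is equally needed, and equally unjustified, in the paper's own proof.
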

We place the proof of Theorem $\ref{th5}$ in the Sec. \ref{ap3}. Next we present an example to show the results.
\begin{example}
	Consider a $3$-qubit $W$ state with the white noise,
	\begin{align}
	\rho_W=p\ket{W}\bra{W}+\frac{1-p}{8}\mathbb{I},\hspace{3mm} p\in(0,1).
	\end{align}
	here we present the bound of $\mathcal{G}_2(\cdot)$ for $\rho_W$ in Fig. \ref{fig1}. There we plot the lower bound of G$2$C for $\rho_W$.
	\begin{figure}
		\centering
		\includegraphics[width=90mm]{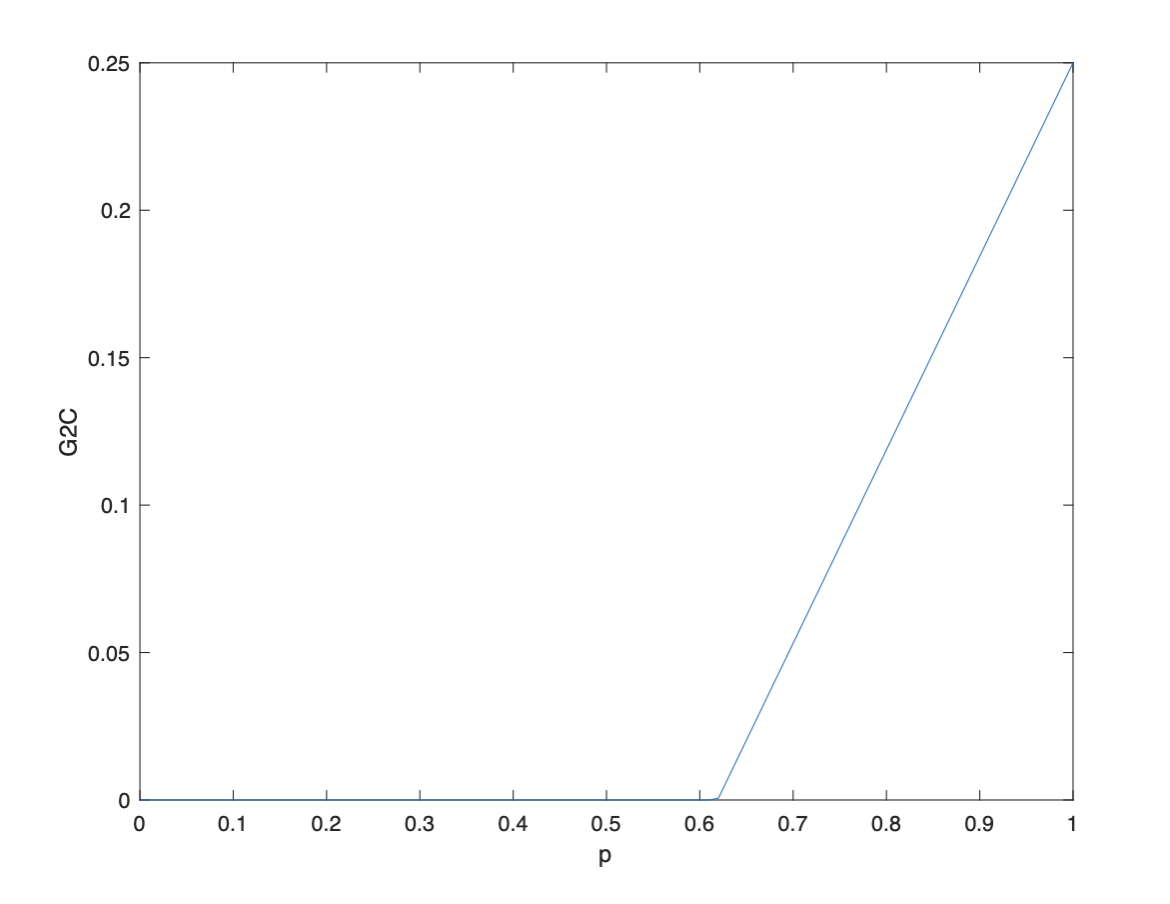}\\
		\caption{In this figure,  the red line denotes the lower bound of G2C for $\rho_W$, here we use $\ket{\phi}=\ket{W}$ and $s_1=\frac{2}{3}.$ }\label{fig1}
	\end{figure}
\end{example}
\begin{example}
	Let $\rho_{GHZ}$ be a $3$-qubit mixed state,
	\begin{align}
	\rho_{GHZ}=&c\ket{GHZ}\bra{GHZ}+\frac{1-c}{8}\mathbb{I},\nonumber\\
	\ket{GHZ}=&\frac{1}{\sqrt{2}}(\ket{000}+\ket{111}),\hspace{3mm} c\in(0,1).
	\end{align}
		here we present the bound of $\mathcal{G}_q(\cdot)$ for $\rho_W$ in Fig. \ref{fig1}. There we plot the lower bound of G$q$C for $\rho_{GHZ}$ when $q\in[2,12]$. From the Fig. \ref{fig2}, we see that the lower bound we present is monotone in terms of $q$.
		\begin{figure}[H]
			\centering
			\includegraphics[width=90mm]{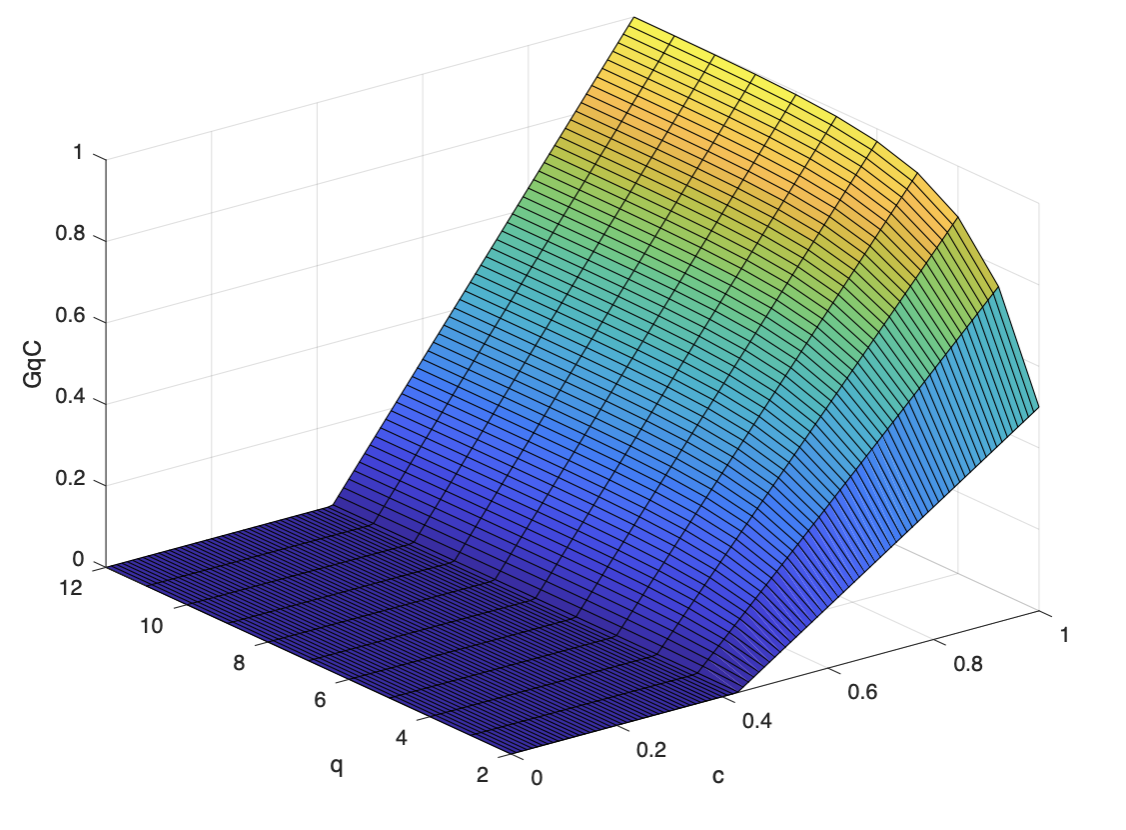}\\
			\caption{In this figure, we plot the lower bound of $G_q(\rho_{GHZ})$ when $q\in[2,12]$, here we use $\ket{\phi}=\ket{GHZ}.$}\label{fig2}
		\end{figure}
\end{example}
\subsection{Comparisons with other GME measures}\label{s3}
\indent In this section, we present some examples on the $\mathcal{G}_q(\cdot)$ and compare them with other GME measures, GBC and GGM. Through comparison, we can get the difference between GqC and other GME measures, also we can get the advantages of GqC.\\
\indent Entanglement ordering is meaningful when considering the entanglement measures \cite{virmani2001optimal,zyczkowski2002relativity}. It means that if $E_1$ and $E_2$ are two entanglement measures, for any pair of $\sigma_1$ and $\sigma_2$, $E_1(\sigma_1)\ge E_1(\sigma_2)$ derives $E_2(\sigma_1)\ge E_2(\sigma_2).$ The GqC can lead to different entanglement ordering when comparing with other GME measures. Next we consider the following two classes of states in three-qubit systems to show that the entanglement ordering of G4C is different from GMC and GGM. 
\begin{example}
	\begin{align*}
	\textit{Class I:} \hspace{2mm}&\ket{\psi}=\frac{1}{\sqrt{2}}(\cos\theta\ket{000}+\sin\theta\ket{001})+\frac{1}{\sqrt{2}}\ket{111},\\
\textit{Class II:} \hspace{2mm}&\ket{\phi}=\cos\theta\ket{000}+\sin\theta\ket{111}.	
	\end{align*}
		\begin{figure}[H]
		\centering
				\subfigure[The G4C and GMC for the Class I states and Class II states.]{\label{fig4}
			\includegraphics[width=90mm]{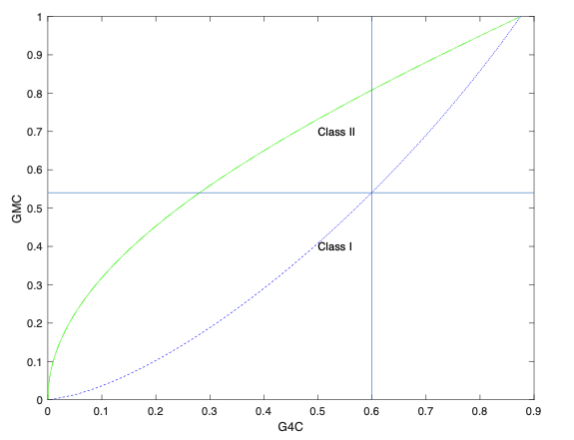}}
		\subfigure[The G4C and GGM for the Class I states and Class II states.]
{	\label{fig5}
				\includegraphics[width=90mm]{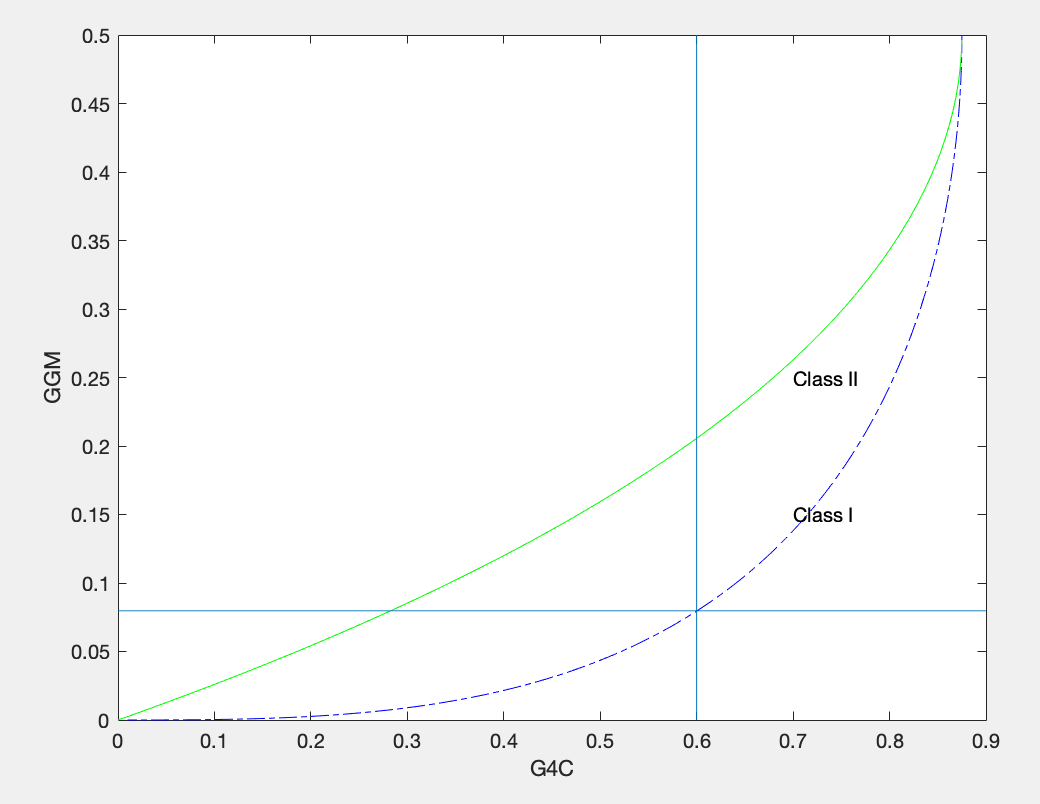}}
			\caption{The G4C, GMC and GGM for the Class I states and Class II states.}
\end{figure}
	\end{example}\par
From the Fig. \ref{fig4}, we see that the G4C owns different entanglement order from the GMC.  For a given state belonging to the Class I state, there are many states in Class II with larger GMC but with smaller G4C. This can be shown by drawing vertical or horizontal lines, then compare the states at the intersection point. Similarly, from the Fig. \ref{fig5}, the entanglement order of G4C is different from GGM. For a given state of the Class I, there are many states in Class II that with larger GGM but with smaller G4C. The opposite results can be arrived at when given a Class II pure state.\\
\indent At last, we present another class of 4-qubit pure states, which shows the advantages of GqC when comparing with GGM and GMC.

	\begin{align}
	\ket{\psi}=&\cos\theta\ket{\phi_1}+\sin\theta\ket{0111},\hspace{3mm} \theta\in(0,\frac{\pi}{2}),\label{e12}\\
	\ket{\phi_1}=&\cos\frac{2\pi}{3}\ket{0100}+\sin\frac{2\pi}{3}\ket{1000}.\nonumber
	\end{align}
		\begin{figure}[H]
		\centering
		\includegraphics[width=85mm]{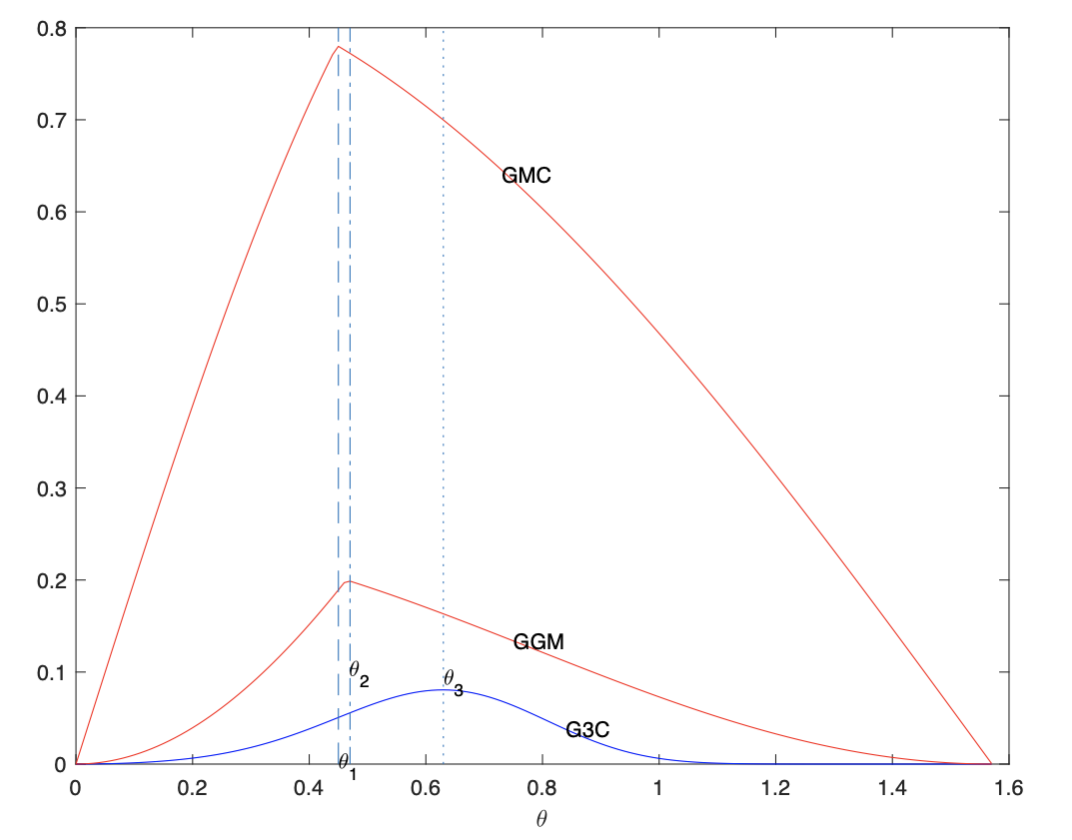}\\
		\caption{The G3C,GMC and GGM for the four qubit pure states (\ref{e12}).}\label{f7}
	\end{figure}\par
As presented in Fig.\ref{f7}, when $\theta$ increases from $[\theta_1,\theta_3],$ the values of G3C increases, while the GGM decreases from $[\theta_2,\theta_3]$ and GMC decreases from $[\theta_1,\theta_3].$ Next when $\theta$ ranges in $[0,\theta_3],$ each G3C value corresponds to a unique state in the class $(\ref{e12}),$ while there exists pairs of states in the class $(\ref{e12})$ with the same GGM or GBC, that is, the GME measure G3C can detects the robustness between some states while the GGM and GBC cannot. Moreover, Fig. \ref{f7} show the smoothness of G3C, a sharp peak appears with the varying $\theta$ when considering the GMC and GGM.
\par
\section{Conclusion}
\indent In this paper, we have proposed and investigated a GME measure based on the geometric mean method. First we have presented the GqC is a GME measure and satisfies the subadditivity and continuity for pure states. We have also presented the analytical expressions of GqC for the W states and GHZ states in $n$-qubit systems. From the analytical expressions, we can see that the entanglement of the GHZ state is stronger than the W state. Next we have presented a lower bound of the GqC based on the method proposed in \cite{dai2020experimentally}. At last, we have presented the advantages of the GqC by comparing with the GMC and GGM through some examples. Due to the importance of the study of GME measures, our results can provide a reference for future work on the study of multiparty quantum entanglement.

\bibliographystyle{IEEEtran}
\bibliography{ref}
\section{Appendix}

\subsection{Some results on $\mathcal{G}_q(\ket{W_n})$ and $\mathcal{G}_q(\ket{GHZ_n})$}\label{ap1}
	An $n$-qubit W state can be represented as
	\begin{align*}
	\ket{W_n}=\frac{1}{\sqrt{n}}(\ket{00\cdots 1}+\ket{01\cdots 0}+\cdots+\ket{00\cdots1}).
	\end{align*}
	Through computation, we have
\begin{align*}
\mathcal{C}_{q1,n-1}(\ket{W_n})=&1-\frac{1}{n^q}-\frac{(n-1)^q}{n^q},\nonumber\\
\mathcal{C}_{q,2,n-2}(\ket{W_n})=&1-\frac{2^q}{n^q}-\frac{(n-2)^q}{n^q},\nonumber\\
\cdots,&\hspace{2mm}\cdots,\hspace{2mm}\cdots\nonumber\\
\mathcal{C}_{q,k,n-k}(\ket{W_n})=&1-\frac{k^q}{n^q}-\frac{(n-k)^q}{n^q},
\end{align*} 
	An $n$-qubit GHZ state can be represented as
\begin{align*}
\ket{GHZ_n}=\frac{1}{\sqrt{2}}(\ket{00\cdots 0}+\ket{11\cdots1}).
\end{align*}
Through computation, we have
\begin{align*}
\mathcal{C}_{q,k,n-k}(\ket{GHZ_n})=1-\frac{1}{2^{q-1}},
\end{align*}
then we have
\begin{widetext}
	\begin{align*}
	\mathcal{G}_q(\ket{W_n})=
	\begin{split}
	\left\{
	\begin{array}{lr}
	(\Pi_{k=1}^{\frac{n-1}{2}}[1-\frac{k^q}{n^q}-\frac{(n-k)^q}{n^q}]^{C_n^k})^{\frac{1}{c(\alpha)}},\hspace{3mm} \textit{if $n$ is odd,}\\
	(\Pi_{k=1}^{\frac{n}{2}-1}[1-\frac{k^q}{n^q}-\frac{(n-k)^q}{n^q}]^{C_n^k}[1-\frac{1}{2^{q-1}}]^{\frac{C_n^{\frac{n}{2}}}{2}})^{\frac{1}{c(\alpha)}},\hspace{3mm}\textit{if $n$ is even.}
	\end{array}
	\right.
	\end{split}
	\end{align*}
	\begin{align*}
	\frac{\mathcal{G}_q(\ket{W_n})}{\mathcal{G}_q(\ket{GHZ_n})}
	=\begin{split}
	\left\{
	\begin{array}{lr}
	\frac{\exp[\frac{\sum_{k=1}^{\frac{n-1}{2}}C_n^k}{c(\alpha)}(\ln(1-\frac{k^q}{n^q}-\frac{(n-k)^q}{n^q}))]}{1-\frac{1}{2^{q-1}}},\hspace{3mm} \textit{if $n$ is odd,}\\
	\frac{\exp[\frac{\sum_{k=1}^{\frac{n}{2}-1}C_n^k}{c(\alpha)}\ln(1-\frac{k^q}{n^q}-\frac{(n-k)^q}{n^q})+\frac{\frac{C_n^{\frac{n}{2}}}{2}\ln(1-\frac{1}{2^{q-1}})}{c(\alpha)}]}{1-\frac{1}{2^{q-1}}},\hspace{3mm}\textit{if $n$ is even.}
	\end{array}
	\right.
	\end{split}
	\end{align*}
\end{widetext}
Then by using the Stolz-Cresaro theorem, we have 
$$\lim_{k\rightarrow\infty}\frac{\mathcal{G}_q(\ket{W_{2k}})}{\mathcal{G}_q(\ket{GHZ_{2k}})}=1,$$
$$\lim_{k\rightarrow\infty}\frac{\mathcal{G}_q(\ket{W_{2k+1}})}{\mathcal{G}_q(\ket{GHZ_{2k+1}})}=1.$$
\subsection{The proof on the continuity of $\mathcal{G}_q(\cdot)$ for pure states} \label{ap4}
Here we present the proof of Lemma \ref{l0}.\par
Lemma \ref{l0}:	Assume $\ket{\psi_1}$ and $\ket{\psi_2}$ are pure states in $\mathcal{H}_{d}\otimes\mathcal{H}_{d},$ when $\norm{\ket{\psi_1}-\ket{\psi_2}}_1\le \epsilon$, then we have
\begin{align}
|\mathcal{C}_q(\ket{\psi_1})-\mathcal{C}_q(\ket{\psi_2})|\le d[(1+\frac{\epsilon}{d})^q-1].
\end{align}
\begin{proof}
	As partial trace is trace-preserving, then $\norm{\rho_A-\sigma_A}_1\le \epsilon,$ here $\rho_A=Tr_B\ket{\psi}\bra{\psi}$, $\sigma_A=Tr_B\ket{\sigma}\bra{\sigma}.$\par
	Next as $\norm{\cdot}_1$ is unitarily invariant norm, then
	\begin{align*}
	\norm{Eig^{\downarrow}(\rho_A)-Eig^{\downarrow}(\sigma_A)}\le \norm{\rho_A-\sigma_A}_1,\nonumber\\
	\norm{\rho_A-\sigma_A}_1 \le \norm{Eig^{\downarrow}(\rho_A)-Eig^{\uparrow}(\sigma_A)},
	\end{align*}
	that is, we only need to consider the classical case. Readers who are interesting in the above two inequalities please refer to \cite{bhatia2013matrix}.\par
	Assume $\rho_A$ and $\sigma_A$ are two diagonal density matrices with their diagonal elements are $\{p_i\}_{i=1}^d$ and $\{r_i\}_{i=1}^d,$ respectively. And $\{p_i\}_{i=1}^d$ and $\{r_i\}_{i=1}^d$ satisfy $\sum_ip_i=\sum_ir_i=1,$ and $\sum_i|p_i-r_i|\le\epsilon.$ Then 
	\begin{align}
	|\mathcal{C}_q(\rho_A)-\mathcal{C}_q(\sigma_A)|=&|\tr\rho_A^q-\tr\sigma_A^q|\nonumber\\
	\le &\sum_i|p_i^q-r^q_i|\label{in1}
	\end{align}
	Next let $|p_i-r_i|=\epsilon_i$, $\sum_i\epsilon_i\le \epsilon,$ when $p_i=r_i+\epsilon_i,$ then 
	\begin{align*}
	|p_i^q-r_i^q|=&|(r_i+\epsilon_i)^q-r_i^q|\nonumber\\
	\le& (1+\epsilon_i)^q-1,
	\end{align*}
	the last inequality is due to that $(r_i+\epsilon_i)^q-r_i^q$ is increasing in terms of $r_i$, and $r_i\in(0,1).$ And due to that  $(r_i+\epsilon_i)^q-r_i^q$ is increasing in terms of $r_i$, when $p_i=r_i-\epsilon_i$, the above inequality is also valid. Then the inequality $(\ref{in1})$ becomes 
	\begin{align}
	(\ref{in1})=\sum_i|p_i^q-r^q_i|\le& \epsilon^q\frac{\sum_i[(1+\epsilon_i)^q-1]}{\epsilon^q}\nonumber\\
	\le& d[(1+\frac{\epsilon}{d})^q-1],
	\end{align}
	when all $\epsilon_i=\frac{\epsilon}{d}$, the equality in the last inequality is valid.
\end{proof}\par
Theorem \ref{th6}:	Assume $\ket{\psi_1}$ and $\ket{\psi_2}$ are two pure states in $n$-partite systems $\mathcal{H}_{d}\otimes\mathcal{H}_{d}\otimes\cdots \otimes\mathcal{H}_{d}$, here $\norm{\ket{\psi_1}-\ket{\psi_2}}\le \epsilon,$ then we have
\begin{align}
|\mathcal{G}_q(\ket{\psi_1})-\mathcal{G}_q(\ket{\psi_2})|\le [\sum_{i=1}^{\frac{n-1}{2}}C_n^id^i[(1+\frac{\epsilon}{d^i})^q-1]]^{\frac{1}{c(\alpha)}}
\end{align}
\begin{proof}
	When $n$ is odd, then we have 
	\begin{align}
	&|\mathcal{G}_q(\ket{\psi_1})-\mathcal{G}_q(\ket{\psi_2})|\nonumber\\
	=&|\mathcal{P}_q(\ket{\psi_1})^{\frac{1}{c(\alpha)}}-\mathcal{P}_q(\ket{\psi_2})^{\frac{1}{c(\alpha)}}|\nonumber\\
	\le&|\mathcal{P}_q(\ket{\psi_1})-\mathcal{P}_q(\ket{\psi_2})|^{\frac{1}{c(\alpha)}}\nonumber\\
	=&|\Pi_{\alpha_i\in\alpha}\mathcal{C}_{qA_{\alpha_i}B_{\alpha_i}}(\ket{\psi_1})-\Pi_{\alpha_i\in\alpha}\mathcal{C}_{qA_{\alpha_i}B_{\alpha_i}}(\ket{\psi_2})|^{\frac{1}{c(\alpha)}}\nonumber\\
	\le&[\sum_{i=1}^{\frac{n-1}{2}}C_n^id^i[(1+\frac{\epsilon}{d^i})^q-1]]^{\frac{1}{c(\alpha)}}.
	\end{align}
	the first inequality is due to that when $p,q,x\in(0,1)$, $|p^x-q^{x}|\le |p-q|^x$, the second inequality is due to the following inequality, when $x_i,y_i\in(0,1),$ $i=1,2,\cdots,n,$ then we have
	\begin{align}
	&|x_1x_2\cdots x_n-y_1y_2\cdots y_n|\nonumber\\
	=& |(x_1-y_1)x_2x_3\cdots x_n+y_1(x_2-y_2)x_3\cdots x_n\nonumber\\
	\hspace{1mm}+&y_1y_2(x_3-y_3)x_4x_5\cdots x_n\cdots+y_1y_2\cdots y_{n-1}(x_n-y_n)|\nonumber\\
	\le& \sum_i|x_i-y_i|,
	\end{align}
	here the last inequality is due to the triangle inequality and $x_i,y_i\in(0,1).$
\end{proof}
\subsection{The proof of Theorem \ref{th4}}\label{ap2}
Here we prove Theorem \ref{th4} based on the method in \cite{zhang2016evaluation}.\par
Theorem \ref{th4}:
\emph{	For a bipartite mixed state $\rho$ on the system $\mathcal{H}_{m}\otimes\mathcal{H}_{n} (m\le n)$, its revised parametrized entanglement measure $\mathcal{C}_q(\rho)$ satisfies 
	\begin{align}
	\mathcal{C}_q(\rho)\ge co[R(\Lambda)],
	\end{align}
	where $\Lambda=\max\{\frac{\bra{\phi}\rho\ket{\phi}}{s_1m},\frac{1}{m}\}$, $R(\Lambda)=1-\gamma(\Lambda)^q-\frac{[1-\gamma(\Lambda)]^q}{(m-1)^{q-1}},$ with $\gamma(\Lambda)=\frac{\sqrt{\Lambda}+\sqrt{(m-1)(1-\Lambda)}}{m}$, and $$co[R(\Lambda)]=\frac{m^{q-1}-1}{m^{q-2}(m-1)}(\Lambda-\frac{1}{m}).$$}
\begin{proof}
Here we consider the following function
\begin{align}
R(\lambda)=\min_{\vec{\mu}}\{L(\vec{\mu})|\lambda=\frac{1}{m}(\sum_i\sqrt{\mu_i})^2\}.
\end{align}
Here $L(\vec{\mu})=1-\sum_i \mu_i^2$.
Due to the results in \cite{terhal2000entanglement,berry2003bounds}, the minimum $L(\vec{\mu})$ versus $\lambda$ to $\vec{\mu}$ in the form 
\begin{align}
\vec{\mu}=\{t,\frac{1-t}{m-1},\frac{1-t}{m-1},\cdots,\frac{1-t}{m-1}\}\hspace{3mm} \textit{for $t\in[\frac{1}{m},1],$}
\end{align}
Therefore, we have the minimum $L(\vec{\mu})$ and the function $t(\lambda)$ are 
\begin{align}
L(t)=&1-t^q-\frac{(1-t)^q}{(m-1)^{q-1}},\label{t0}\\
t(\lambda)=&\frac{1}{m}(\sqrt{\lambda}+\sqrt{(m-1)(1-\lambda)})^2, \hspace{3mm} \textit{with $\lambda\in[\frac{1}{m},1].$}\label{t1}
\end{align}
Substituting $(\ref{t1})$ into (\ref{t0}), we have
\begin{align}
L^{'}_{\lambda}=L^{'}_t\times t^{'}_{\lambda},
\end{align}
through computation, we have
\begin{align}
&L^{'}_t=-qt^{q-1}+\frac{q(1-t)^{q-1}}{(m-1)^{q-1}},    \\
&L_t^{''}=-q(q-1)t^{q-2}-\frac{q(q-1)(1-t)^{q-2}}{(m-1)^{q-1}}\le 0,\\
&t_{\lambda}^{'}=\frac{1}{m}(\sqrt{\lambda}+\sqrt{(m-1)(1-\lambda)})\nonumber\\
&\hspace{20mm}\times(\frac{1}{\sqrt{\lambda}}+\frac{1-m}{\sqrt{(m-1)(1-\lambda)}})\le0,
\end{align}
as $L_t^{''}\le 0,$ $\max_t L^{'}_t=L^{'}(\frac{1}{m})=0,$ that is $L_{\lambda}^{'}\ge 0,$ $L(\lambda)$ is an increasing function. \par
Next we prove $L(\lambda)$ is concave. 
\begin{align}
L_{\lambda}^{''}=L_{t}^{''}{t_{\lambda}^{'}}^2+L^{'}_t{t_{\lambda}^{''}}
\end{align}
and
\begin{align}
t_{\lambda}^{''}=-\frac{\sqrt{(m-1)(1-\lambda)}}{2m(1-\lambda)^2\lambda^{\frac{3}{2}}}.
\end{align}
Through rough computation, 
\begin{align*}
L_{\lambda}^{''}\le 0,
\end{align*}
then we have
\begin{align}
co[R(\lambda)]=\frac{m^q-1}{m^{q-1}(m-1)}(\lambda-\frac{1}{m}).
\end{align}
\par Next we have 
\begin{align}
\mathcal{C}_q(\rho)=&\sum_i p_i \mathcal{C}_q(\ket{\psi_i})=\sum_i p_i L(\vec{\mu^j})\nonumber\\
\ge& \sum_i p_i co[R(\lambda^j)]\ge co[R(\sum_jp_j\lambda^j)]\ge co[R(\Lambda)], \label{r1}
\end{align}
where $\ket{\psi_j}=U_A\otimes U_B\sum_i \sqrt{\mu_i^j}\ket{ii}$ with $\sqrt{\mu_i^j}$ being its Schmidt coefficients in decreaing order. The last inequality is due to that $co{R(\lambda)}$ is an increasing function.
\end{proof}
\subsection{The proof of Theorem \ref{th5}}\label{ap3}
Here we present the proof of Theorem \ref{th5}.\par
Theorem \ref{th5}: 	Assume $\rho_{A_1A_2\cdots A_n}$ is a mixed state on an $n$-paritite system. Then we have
\begin{align}
\mathcal{G}_q(\rho)\ge\frac{m^{q-1}-1}{m^{q-2}(m-1)}(\Lambda^{'}-\frac{1}{m}),
\end{align}
here we denote $\Lambda^{'}=\max\{\frac{\bra{\phi}\rho\ket{\phi}}{s_1m},\frac{1}{m}\}$\par
\begin{proof}
	Assume $\{p_j,\ket{\phi_j}\}$ is the optimal decompostion of the state $\rho$ in terms of $\mathcal{G}_q(\rho),$ then we have
	\begin{align}
	\mathcal{G}_q(\rho)=&\sum_j p_j[ \Pi_{\alpha_i\in \alpha}\mathcal{C}_{q A_{\alpha_i}B_{\alpha_i}}(\ket{\phi_j})]^{\frac{1}{c(\alpha)}}\nonumber\\
	=&\sum_j [ \Pi_{\alpha_i\in \alpha}p_j\mathcal{C}_{q A_{\alpha_i}B_{\alpha_i}}(\ket{\phi_j})]^{\frac{1}{c(\alpha)}}\nonumber\\
	\ge&\sum_j \min_{\alpha_i\in\alpha}p_j\mathcal{C}_{qA_{\alpha_i}B_{\alpha_i}}(\ket{\phi_j})\nonumber\\
	\ge&\min_{\alpha_i\in\alpha}\sum_j p_j\mathcal{C}_{qA_{\alpha_i}B_{\alpha_i}}(\ket{\phi_j})\nonumber\\
	\ge& \frac{m^{q-1}-1}{m^{q-2}(m-1)}(\Lambda^{'}-\frac{1}{m}),
	\end{align}
	here we denote $\Lambda^{'}=\max\{\frac{\bra{\phi}\rho\ket{\phi}}{s_1m},\frac{1}{m}\}.$ The first inequality is due to that $\min(x,y)\le \sqrt{xy},$ $\forall x,y \in[0,1].$ The last inequality is due to the Lemma \ref{l0}.
\end{proof}
\end{document}